
\ifx\onecol\undefined
\documentclass[journal]{IEEEtran}
\else 
\documentclass[onecolumn,draftcls,12pt]{IEEEtran}
\fi

\usepackage[utf8]{inputenc}
\usepackage{physics}
\usepackage{amsmath}
\usepackage{amssymb}
\usepackage{subfigure}
\usepackage{subfig}
\usepackage{graphicx}
\usepackage{graphics}
\usepackage{color}
\usepackage{psfrag}
\usepackage{cite}
\usepackage{balance}
\usepackage{algorithm}
\usepackage{accents}
\usepackage{amsthm}
\usepackage{bm}
\usepackage{url}
\usepackage{algorithmic}
\usepackage[english]{babel}
\usepackage{multirow}
\usepackage{enumerate}
\usepackage{cases}
\usepackage{stfloats}
\usepackage{dsfont}
\usepackage{soul}
\usepackage{amsfonts}
\usepackage{fancyhdr}
\usepackage{hhline}
\usepackage{array}
\usepackage{booktabs}
\usepackage{framed}
\usepackage{float}
\usepackage{soul}
\usepackage{colortbl}
\usepackage{url}
\usepackage{mathtools}
\usepackage{threeparttable}
\usepackage{booktabs}

\definecolor{lightblue}{rgb}{0.8,0.8,1}


\newcommand{\mb}[1]{{  \mathbf  #1}}


\begin{document}
\newtheorem{theorem}{Theorem}
\newtheorem{acknowledgement}[theorem]{Acknowledgement}
\newtheorem{axiom}[theorem]{Axiom}
\newtheorem{case}[theorem]{Case}
\newtheorem{claim}[theorem]{Claim}
\newtheorem{conclusion}[theorem]{Conclusion}
\newtheorem{condition}[theorem]{Condition}
\newtheorem{conjecture}[theorem]{Conjecture}
\newtheorem{criterion}[theorem]{Criterion}
\newtheorem{definition}{Definition}
\newtheorem{exercise}[theorem]{Exercise}
\newtheorem{lemma}{Lemma}
\newtheorem{corollary}{Corollary}
\newtheorem{notation}[theorem]{Notation}
\newtheorem{problem}[theorem]{Problem}
\newtheorem{proposition}{Proposition}
\newtheorem{solution}[theorem]{Solution}
\newtheorem{summary}[theorem]{Summary}
\newtheorem{assumption}{Assumption}
\newtheorem{example}{\bf Example}
\newtheorem{remark}{\bf Remark}

\newtheorem{thm}{Corollary}[section]
\renewcommand{\thethm}{\arabic{section}.\arabic{thm}}

\def\qed{$\Box$}
\def\QED{\mbox{\phantom{m}}\nolinebreak\hfill$\,\Box$}
\def\proof{\noindent{\emph{Proof:} }}
\def\poof{\noindent{\emph{Sketch of Proof:} }}
\def
\endproof{\hspace*{\fill}~\qed
\par
\endtrivlist\unskip}
\def\endproof{\hspace*{\fill}~\qed\par\endtrivlist\vskip3pt}

\def\E{\mathsf{E}}
\def\eps{\varepsilon}
\def\phi{\varphi}
\def\Lsp{{\boldsymbol L}}
\def\Bsp{{\boldsymbol B}}
\def\lsp{{\boldsymbol\ell}}
\def\Ltsp{{\Lsp^2}}
\def\Lpsp{{\Lsp^p}}
\def\Linsp{{\Lsp^{\infty}}}
\def\LtR{{\Lsp^2(\Rst)}}
\def\ltZ{{\lsp^2(\Zst)}}
\def\ltsp{{\lsp^2}}
\def\ltZt{{\lsp^2(\Zst^{2})}}
\def\ninN{{n{\in}\Nst}}
\def\oh{{\frac{1}{2}}}
\def\grass{{\cal G}}
\def\ord{{\cal O}}
\def\dist{{d_G}}
\def\conj#1{{\overline#1}}
\def\ntoinf{{n \rightarrow \infty}}
\def\toinf{{\rightarrow \infty}}
\def\tozero{{\rightarrow 0}}
\def\trace{{\operatorname{trace}}}
\def\ord{{\cal O}}
\def\UU{{\cal U}}
\def\rank{{\operatorname{rank}}}
\def\acos{{\operatorname{acos}}}

\def\SINR{\mathsf{SINR}}
\def\SNR{\mathsf{SNR}}
\def\SIR{\mathsf{SIR}}
\def\tSIR{\widetilde{\mathsf{SIR}}}
\def\Ei{\mathsf{Ei}}
\def\l{\left}
\def\r{\right}
\def\lb{\left\{}
\def\rb{\right\}}

\setcounter{page}{1}

\newcommand{\eref}[1]{(\ref{#1})}
\newcommand{\fig}[1]{Fig.\ \ref{#1}}

\def\bydef{:=}
\def\ba{{\mathbf{a}}}
\def\bb{{\mathbf{b}}}
\def\bc{{\mathbf{c}}}
\def\bd{{\mathbf{d}}}
\def\bee{{\mathbf{e}}}
\def\bff{{\mathbf{f}}}
\def\bg{{\mathbf{g}}}
\def\bh{{\mathbf{h}}}
\def\bi{{\mathbf{i}}}
\def\bj{{\mathbf{j}}}
\def\bk{{\mathbf{k}}}
\def\bl{{\mathbf{l}}}
\def\bm{{\mathbf{m}}}
\def\bn{{\mathbf{n}}}
\def\bo{{\mathbf{o}}}
\def\bp{{\mathbf{p}}}
\def\bq{{\mathbf{q}}}
\def\br{{\mathbf{r}}}
\def\bs{{\mathbf{s}}}
\def\bt{{\mathbf{t}}}
\def\bu{{\mathbf{u}}}
\def\bv{{\mathbf{v}}}
\def\bw{{\mathbf{w}}}
\def\bx{{\mathbf{x}}}
\def\by{{\mathbf{y}}}
\def\bz{{\mathbf{z}}}
\def\b0{{\mathbf{0}}}

\def\bA{{\mathbf{A}}}
\def\bB{{\mathbf{B}}}
\def\bC{{\mathbf{C}}}
\def\bD{{\mathbf{D}}}
\def\bE{{\mathbf{E}}}
\def\bF{{\mathbf{F}}}
\def\bG{{\mathbf{G}}}
\def\bH{{\mathbf{H}}}
\def\bI{{\mathbf{I}}}
\def\bJ{{\mathbf{J}}}
\def\bK{{\mathbf{K}}}
\def\bL{{\mathbf{L}}}
\def\bM{{\mathbf{M}}}
\def\bN{{\mathbf{N}}}
\def\bO{{\mathbf{O}}}
\def\bP{{\mathbf{P}}}
\def\bQ{{\mathbf{Q}}}
\def\bR{{\mathbf{R}}}
\def\bS{{\mathbf{S}}}
\def\bT{{\mathbf{T}}}
\def\bU{{\mathbf{U}}}
\def\bV{{\mathbf{V}}}
\def\bW{{\mathbf{W}}}
\def\bX{{\mathbf{X}}}
\def\bY{{\mathbf{Y}}}
\def\bZ{{\mathbf{Z}}}

\def\bxi{{\boldsymbol{\xi}}}

\def\sT{{\mathsf{T}}}
\def\sH{{\mathsf{H}}}
\def\cmp{{\text{cmp}}}
\def\cmm{{\text{cmm}}}
\def\WPT{{\text{WPT}}}
\def\lo{{\text{lo}}}
\def\gl{{\text{gl}}}

\def\tT{{\widetilde{T}}}
\def\tF{{\widetilde{F}}}
\def\tP{{\widetilde{P}}}
\def\tG{{\widetilde{G}}}
\def\tbh{{\widetilde{\mathbf{h}}}}
\def\tbg{{\widetilde{\mathbf{g}}}}

\def\mA{{\mathbb{A}}}
\def\mB{{\mathbb{B}}}
\def\mC{{\mathbb{C}}}
\def\mD{{\mathbb{D}}}
\def\mE{{\mathbb{E}}}
\def\mF{{\mathbb{F}}}
\def\mG{{\mathbb{G}}}
\def\mH{{\mathbb{H}}}
\def\mI{{\mathbb{I}}}
\def\mJ{{\mathbb{J}}}
\def\mK{{\mathbb{K}}}
\def\mL{{\mathbb{L}}}
\def\mM{{\mathbb{M}}}
\def\mN{{\mathbb{N}}}
\def\mO{{\mathbb{O}}}
\def\mP{{\mathbb{P}}}
\def\mQ{{\mathbb{Q}}}
\def\mR{{\mathbb{R}}}
\def\mS{{\mathbb{S}}}
\def\mT{{\mathbb{T}}}
\def\mU{{\mathbb{U}}}
\def\mV{{\mathbb{V}}}
\def\mW{{\mathbb{W}}}
\def\mX{{\mathbb{X}}}
\def\mY{{\mathbb{Y}}}
\def\mZ{{\mathbb{Z}}}

\def\cA{\mathcal{A}}
\def\cB{\mathcal{B}}
\def\cC{\mathcal{C}}
\def\cD{\mathcal{D}}
\def\cE{\mathcal{E}}
\def\cF{\mathcal{F}}
\def\cG{\mathcal{G}}
\def\cH{\mathcal{H}}
\def\cI{\mathcal{I}}
\def\cJ{\mathcal{J}}
\def\cK{\mathcal{K}}
\def\cL{\mathcal{L}}
\def\cM{\mathcal{M}}
\def\cN{\mathcal{N}}
\def\cO{\mathcal{O}}
\def\cP{\mathcal{P}}
\def\cQ{\mathcal{Q}}
\def\cR{\mathcal{R}}
\def\cS{\mathcal{S}}
\def\cT{\mathcal{T}}
\def\cU{\mathcal{U}}
\def\cV{\mathcal{V}}
\def\cW{\mathcal{W}}
\def\cX{\mathcal{X}}
\def\cY{\mathcal{Y}}
\def\cZ{\mathcal{Z}}
\def\cd{\mathcal{d}}
\def\Mt{M_{t}}
\def\Mr{M_{r}}
\def\O{\Omega_{M_{t}}}
\newcommand{\figref}[1]{{Fig.}~\ref{#1}}
\newcommand{\tabref}[1]{{Table}~\ref{#1}}

\newcommand{\fb}{\tx{fb}}
\newcommand{\nf}{\tx{nf}}
\newcommand{\BC}{\tx{(bc)}}
\newcommand{\MAC}{\tx{(mac)}}
\newcommand{\Pout}{p_{\mathsf{out}}}
\newcommand{\nnn}{\nn\\}
\newcommand{\FB}{\tx{FB}}
\newcommand{\TX}{\tx{TX}}
\newcommand{\RX}{\tx{RX}}
\renewcommand{\mod}{\tx{mod}}
\newcommand{\m}[1]{\mathbf{#1}}
\newcommand{\td}[1]{\tilde{#1}}
\newcommand{\sbf}[1]{\scriptsize{\textbf{#1}}}
\newcommand{\stxt}[1]{\scriptsize{\textrm{#1}}}
\newcommand{\suml}[2]{\sum\limits_{#1}^{#2}}
\newcommand{\sumlk}{\sum\limits_{k=0}^{K-1}}
\newcommand{\eqhsp}{\hspace{10 pt}}
\newcommand{\tx}[1]{\texttt{#1}}
\newcommand{\Hz}{\ \tx{Hz}}
\newcommand{\sinc}{\tx{sinc}}
\newcommand{\diag}{\mathrm{diag}}
\newcommand{\MAI}{\tx{MAI}}
\newcommand{\ISI}{\tx{ISI}}
\newcommand{\IBI}{\tx{IBI}}
\newcommand{\CN}{\tx{CN}}
\newcommand{\CP}{\tx{CP}}
\newcommand{\ZP}{\tx{ZP}}
\newcommand{\ZF}{\tx{ZF}}
\newcommand{\SP}{\tx{SP}}
\newcommand{\MMSE}{\tx{MMSE}}
\newcommand{\MINF}{\tx{MINF}}
\newcommand{\RC}{\tx{MP}}
\newcommand{\MBER}{\tx{MBER}}
\newcommand{\MSNR}{\tx{MSNR}}
\newcommand{\MCAP}{\tx{MCAP}}
\newcommand{\vol}{\tx{vol}}
\newcommand{\ah}{\hat{g}}
\newcommand{\tg}{\tilde{g}}
\newcommand{\teta}{\tilde{\eta}}
\newcommand{\heta}{\hat{\eta}}
\newcommand{\uh}{\m{\hat{s}}}
\newcommand{\eh}{\m{\hat{\eta}}}
\newcommand{\hv}{\m{h}}
\newcommand{\hh}{\m{\hat{h}}}
\newcommand{\Po}{P_{\mathrm{out}}}
\newcommand{\Poh}{\hat{P}_{\mathrm{out}}}
\newcommand{\Ph}{\hat{\gamma}}
\newcommand{\mat}[1]{\begin{matrix}#1\end{matrix}}
\newcommand{\ud}{^{\dagger}}
\newcommand{\C}{\mathcal{C}}
\newcommand{\nn}{\nonumber}
\newcommand{\nInf}{U\rightarrow \infty}

\title{Rydberg Atomic Receivers for Multi-Band Communications and Sensing
}
\author{{Mingyao Cui, Qunsong Zeng, Minze Chen, Zhanwei Wang, Tianqi Mao, Dezhi Zheng, and Kaibin Huang,~\IEEEmembership{Fellow, IEEE}\vspace{-7mm}}
\thanks{M. Cui, Q. Zeng, Z. Wang, and K. Huang are with the Department of Electrical and Electronic Engineering, The University of Hong Kong, Hong Kong SAR, China (Emails: \{mycui, qszeng, zhanweiw\}@eee.hku.hk, huangkb@hku.hk). M. Chen, T. Mao, and D. Zheng are with the Beijing Institute of
 Technology, Beijing, China (Emails: \{chenmz22,  maotq, zhengdezhi\}@bit.edu.cn). Corresponding authors: Q. Zeng; K. Huang.}}
\maketitle

\begin{abstract}
Harnessing multi-level electron transitions, Rydberg Atomic REceivers (RAREs) can detect wireless signals across a wide range of frequency bands, from Megahertz to Terahertz. This capability enables multi-band wireless communications and sensing (CommunSense). Existing research on multi-band RAREs primarily focuses on experimental demonstrations, lacking a tractable model to mathematically characterize their mechanisms. This issue leaves the multi-band RARE as a black box and poses challenges in its practical applications. 
To fill in this gap, this paper investigates the underlying mechanism of multi-band RAREs and explores their optimal performance. 
For the first time, an analytical transfer function with a closed-form expression for multi-band RAREs is derived by solving the quantum response of Rydberg atoms. 
It shows that a multi-band RARE simultaneously serves as a \emph{multi-band atomic mixer} for down-converting multi-band signals and a \emph{multi-band atomic amplifier} that reflects its sensitivity to each band. 
Further analysis of the atomic amplifier unveils that the intrinsic gain at each frequency band can be decoupled into a \emph{global gain} term and a \emph{Rabi attention} term. The former determines the overall sensitivity of a RARE to all frequency bands of wireless signals. The latter influences the allocation of the overall sensitivity to each frequency band, representing a unique attention mechanism of multi-band RAREs. The optimal design of the global gain is provided to maximize the overall sensitivity of multi-band RAREs. Subsequently, the optimal Rabi attentions are also derived to maximize the practical multi-band CommunSense performance. An experiment platform is built to validate the effectiveness of the derived transfer function, and numerical results confirm the superiority of multi-band RAREs.
\end{abstract}

\begin{IEEEkeywords}
Rydberg atomic receivers, wireless communications, wireless sensing, multi-band signal detection. 
\end{IEEEkeywords}


\section{Introduction}
The precise measurement of radio-frequency (RF) signals is fundamental to the digital age, serving as a core operation in wireless communications, remote sensing, e-health, and radar systems.
Originating from the domain of quantum sensing, the Rydberg Atomic REceiver (RARE) has emerged as a new concept in high-precision RF detection by exploiting the quantum properties of Rydberg atoms~\cite{RydMag_Liu2023, RydbMag_Cui2024, Shawei2025, RydMag_Fancher2021,RydChen_Gong2025,gao_rydberg_2025}. 
Specifically, Rydberg atoms are highly excited atoms wherein one or more electrons have transitioned from their ground-state energy level to {an excited} energy state. 
Due to their large transition dipole moments, Rydberg atoms can strongly interact with incident RF signals, triggering electron transitions between resonant energy levels~\cite{RydMag_Fancher2021}.
Capitalizing on these transitions, RAREs can capture the amplitude, frequency, phase, and polarization of RF signals with unparalleled precision.
Consequently, RAREs have the potential to {complement} or even replace traditional RF receivers in the next-generation wireless communications and sensing (CommunSense) systems~\cite{ zhang_rydberg_2024, QuanSense_Zhang2023, AtomicMIMO_Cui2025,QWC_Cui2025, CE_Ryd2025,QWC_kim2025,Rydberg_Gong2025,liu_deep_2022}.

RAREs offer two major advantages over classical receivers: extremely high sensitivity and a wide range of detectable frequency bands. 
{Classical receivers, comprising metallic antennas and RF front-end circuitry, are fundamentally constrained by Johnson-Nyquist noise that arises from the random thermal motion of free electrons~\cite{nyquist_thermal_1928}.}
In contrast, the {quantum process governing} RAREs, electron transition {between Rydberg states}, is {inherently} immune to thermal noise.  
This {immunity enables} RAREs {to achieve} exceptional sensitivity to RF signals, {\color{black} potentially 1$\sim$2 orders of magnitude higher than that of traditional counterparts~\cite{RydModel_Gong2024, RydGong_2025}.}
Additionally, {conventional metallic antennas require physical dimensions comparable to the carrier wavelength to efficiently couple with incident RF signals.}
{Their detectable frequency ranges are thus limited by their geometric sizes~\cite{chu_physical_1948,Multiband_Abo2024}.}
In contrast, by utilizing electron transitions between numerous energy levels, a single RARE can simultaneously detect multiple distinct frequency bands, ranging from Megahertz (MHz) to Terahertz (THz), without any change of the hardware\cite{zhang_rydberg_2024}. 
This {distinctive} feature underpins the development {of} integrated full-frequency wireless CommunSense platforms, {unifying} sub-6G, midband, millimeter wave (mmWave), and THz bands {into a single receiver device}. 
This paper exploits the RAREs' multi-band detection capability to reshape wireless CommunSense systems.

Existing studies have {experimentally validated} the concurrent multi-band CommunSense capabilities of RAREs. 
The {initial demonstration} of dual-band communications was reported in \cite{MBRARE_Holloway2021}, {where} two species of Rydberg alkali atoms, rubidium and cesium, each with distinct energy-level structures, were mixed in a single vapor cell. 
This configuration enabled the simultaneous detection of amplitude-modulated (AM) signals at 19.626 GHz and frequency-modulated (FM) signals at 20.644 GHz. 
{Subsequent research} demonstrated that leveraging the electron transitions between different energy levels within a single atom species, (e.g., cesium), is sufficient to capture multiple frequency bands, which has become a common practice~\cite{RydMultiband_Du2022}. 
In particular,  the concurrent measurement of frequency bands at 1.72, 12.11, 27.42, 65.11, and 115.75 GHz has been reported in \cite{Rydmultiband_Meyer2023}. It utilizes the technology of \emph{quantum heterodyne sensing}, which introduces external reference signals for assisting signal detection, to excite electron transitions between chosen energy levels, $56D_{5/2}$, $57P_{3/2}$, $54F_{7/2}$, $52F_{7/2}$, and $59P_{3/2}$. 
{Further advances include chip-integrated RAREs for dual-band operation at 300 MHz and 24 GHz~\cite{DBRARE_Ding2024}, and a cascaded orbital angular-momentum transition to enable coverage from 128 MHz to 0.61 THz~\cite{Rydmultiband_Allinson2024}.}
Additionally, experiments have also observed a sensitivity tradeoff of the multi-band RARE, the exclusive effect~\cite{photonics10030328}. It states that a RARE's sensitivities to different frequency bands can influence each other, an increase in sensitivity at one band might result in decreased sensitivity at other bands. 
Some recent studies have also explored the applications of multi-band RAREs, including frequency-hopping communication and angles-of-arrival (AoAs) sensing~\cite{Frequencyhop_Wen2024, MBRARE_Kim2025}. 

RARE enabled multi-band CommunSense is still in its infancy. 
Currently, there exists no tractable model to explicitly characterize the transfer function of multi-band RAREs. 
{Prior studies largely in} the physics community have {predominantly emphasized} experimental verifications~\cite{MBRARE_Holloway2021, RydMultiband_Du2022, Rydmultiband_Meyer2023, DBRARE_Ding2024, Rydmultiband_Allinson2024, photonics10030328, Frequencyhop_Wen2024}. 
{In the existing literature,} only the work in  \cite{MBRARE_Kim2025} makes an initial attempt to establish a signal model for multi-band RAREs. 
However, 
{\color{black} 
the derivation directly adopts the conventional single-band RARE framework, which might be over-simplified as it neglects some delicate but important characteristics of multi-band RAREs.
}
{The lack of a tractable theory renders} the multi-band RARE {as} a black box, 
{hindering rigorous analysis of their quantum phenomena (e.g., the mentioned sensitivity tradeoff) and performance optimization.}

To fill in this knowledge gap, we make an attempt to theoretically characterize the underlying mechanisms of multi-band RAREs and apply the results to optimize their performance in CommunSense systems. 
{We leverage multi-level electron transitions in Rydberg atoms to enable simultaneous multi-band signal reception, augmented by quantum heterodyne sensing to facilitate signal detection.}
It is worth emphasizing that the considered multi-band RARE architecture enables detection of disjoint bands while the conventional orthogonal frequency division multiplexing (OFDM) systems are confined to a single contiguous frequency block (e.g., a 100 MHz channel at 3.5 GHz in 5G standard~\cite{5GNR_Park2017}). 
The key contributions and findings of our work are summarized as follows.
\begin{itemize}
    \item \textbf{Theoretical framework of multi-band RAREs:} 
We develop a theoretical framework to characterize the operational mechanisms underlying multi-band RAREs. {\color{black} By analytically solving the Lindblad master equation that governs electron transition, we derive for the first time a closed‑form transfer function for multi‑band RAREs.}
This model reveals their dual functionality as a \emph{multi-band atomic mixer} and a \emph{multi-band atomic amplifier}. Specifically, each RF signal component undergoes individual down-conversion to the intermediate frequency (IF) through its resonant energy‑level transition. The resulting IF component is amplified by a band‑specific gain factor. Our analysis demonstrates that this gain factor directly determines the receiver's sensitivity per band.
    \item \textbf{Sensitivity {maximization} of multi-band RAREs:} 
    Building on this framework, we maximize the sensitivity of a RARE across all frequency bands by optimizing the Rabi frequencies (electron transition strengths) of the external reference signals used in quantum heterodyne sensing. This problem is intractable due to the mutual coupling among multi-band sensitivities. To solve it, we prove that the band-specific gain factor can be decomposed into a global gain term (common to all bands) and a band‑specific Rabi attention term.
    The global gain determines the overall sensitivity to all bands, which depends solely on the root-sum-square of all Rabi frequencies, termed the \emph{effective Rabi frequency}. 
    The maximization of the RARE's overall sensitivity is thus achievable by optimizing this effective Rabi frequency only, for which the closed‑form global optimum is derived. 
    The Rabi attentions characterize the distribution of Rabi frequencies over all frequency bands, which govern the allocation of overall sensitivity to each band.
    They represent a unique attention mechanism of multi-band RAREs, providing additional degrees of freedom for optimization according to a downstream CommunSense metric. 
    \item \textbf{CommunSense applications of multi-band RAREs:} 
    Last, we apply the above results to optimize the performance of systems supporting multi-band communications and multi-granularity sensing. 
    {Their performances are evaluated using the metrics of spectral efficiency (SE) for communications and normalized Cramér-Rao lower bound (NCRLB) for sensing, both averaged across frequency bands.}
    {We derive optimal Rabi attention allocations for each scenario.}
    Numerical experiments confirm that the optimized multi‑band RARE significantly outperforms both classical receivers and existing single‑band RARE designs in these CommunSense applications.
\end{itemize}

The remainder of the paper is organized as follows. Section~\ref{sec:sys} presents the system model.  The theoretical foundation of the multi-band RARE is established in Section~\ref{sec:2}. The maximal sensitivity of a multi-band RARE is derived in Section~\ref{sec:3}. Section~\ref{sec:4} elaborates on the applications of multi-band RARE in CommunSense systems and the associated optimal Rabi attention allocation.
Experiments are conducted in Section~\ref{sec:6}, and conclusions are drawn in Section~\ref{sec:7}.

\begin{figure}[t!]
	\centering
	\subfigure[]
	{\includegraphics[width=3in]{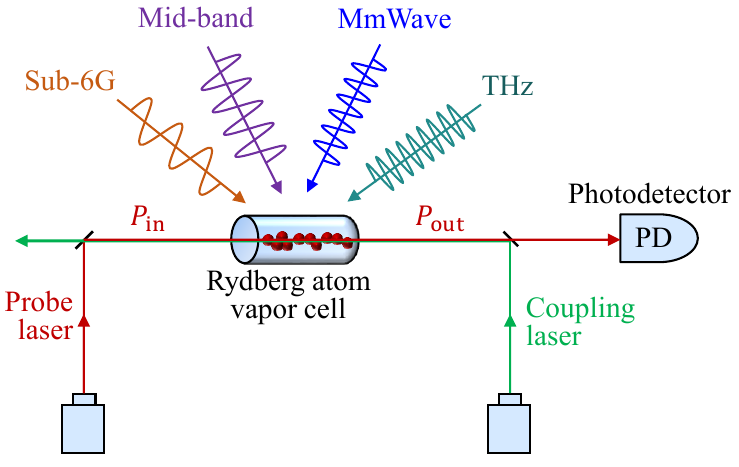}}
	\subfigure[]{\includegraphics[width=3in]{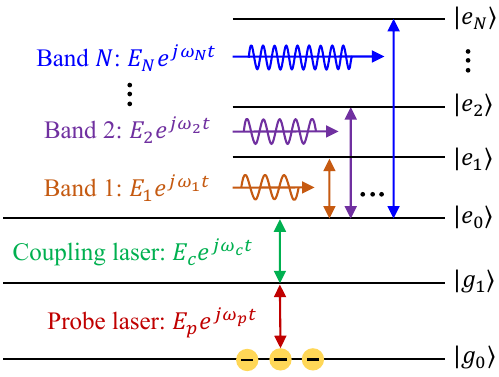}} 
	\caption{(a) Architecture of a multi-band RARE. (b) Multi-level electron transitions for measuring multi-band signals.} %
	\vspace*{-1em}
	\label{img:RARE-MSAC}
\end{figure}

\section{System Model}\label{sec:sys}

Consider the multi-band RARE system illustrated in Fig.~\ref{img:RARE-MSAC} (a). Multi-band signals spanning the frequency range from sub-6G to THz are simultaneously detected to offer diverse CommunSense services. 
For {classical} receivers, measuring these signals necessitates multiple antennas and {RF} front-end circuits due to the wavelength-dependent antenna size~\cite{Multiband_Sim2020,Multiband_Abo2024}.  However, a single RARE is sufficient to capture them by exploiting the multi-level electron transitions as depicted in Fig.~\ref{img:RARE-MSAC} (b). Relevant models are described in this section.

\subsection{Incident Multi-Band Signal Model}
The incident multi-band signal consists of $N$ disjoint frequency bands: 
\begin{align}\label{eq:multi-band_signal}
    E_{\rm RF}(t) = \sum_{n = 1}^N E_{{\rm RF}, n}(t). 
\end{align}
Here, $E_{{\rm RF}, n}(t)$ represents the incident RF signal at band $n$:
\begin{align}\label{eq:RF}
    E_{{\rm RF},n}(t) = \Re{E_n(t)e^{j\omega_n t}},
\end{align}
where $\omega_n$ denotes the angular frequency of band $n$ and $E_n(t)\in\mathbb{C}$ the baseband component. The occupied bandwidth of each RF signal $E_{{\rm RF},n}(t)$ is denoted as $B_n$. 

Existing RARE systems popularly employ the quantum heterodyne sensing technique to facilitate signal detection~\cite{Rydmultiband_Meyer2023}. 
{This technique introduces external local oscillators (LOs), also known as reference sources, to generate reference signals that match the target wireless frequencies, $\omega_n,\:\forall n$. 
The composite baseband signal, $E_n(t)$, thereby consists of two parts: the LO generated reference signal, denoted as $E_{{\rm r},n} \in \mathbb{C}$, and the data signal to be detected, denoted as $E_{{\rm s},n} \in \mathbb{C}$. Besides, to enable phase detection, a slight frequency offset, $\delta_n$, between the reference and data signals is usually introduced, serving as the IF component. Without loss of generality, the IFs are configured in an ascending order $\delta_1 < \delta_2<\cdots<\delta_N$. 
As a result, the composite baseband signal $E_n(t) $ can be written as 
\begin{align}
    E_n(t) = E_{{\rm r},n} + E_{{\rm s},n}e^{j\delta_n t}. 
\end{align}
In practice, the reference signal $E_{{\rm r},n}$ is known to the RARE. The CommunSense information is embedded within the data signal component $E_{{\rm s},n}$, as elaborated below.

\begin{itemize}
    \item \emph{Communication signal:} Quadrature amplitude modulation (QAM) symbols are transmitted from $N$ heterogeneous devices operating at distinct frequency bands. Let $s_n\in\mathbb{S}_n$ be the QAM symbol from the $n$-th device, normalized such that $\mathbb{E}[|s_n|^2]=1$, where $\mathbb{S}_n$ denotes the QAM constellation set. 
    Then, the baseband data signal, $E_{{\rm s},n}$, for communication service is modeled as\footnote{\color{black}
    The factor, $2\pi$, in \eqref{eq:sn} arises from the relationship between the time-averaged power flux density $S$ and the peak amplitude of the oscillating electric field $E$. For an isotropic radiator, $S = \frac{P}{4\pi r^2}$, and for a plane wave, $S = \frac{E^2}{2Z_0}$, which  together yield $E\propto \sqrt{\frac{Z_0P}{2\pi r^2}}$. 
    } 
\begin{align}\label{eq:sn}
    E_{{\rm s},n} = \sqrt{\frac{Z_0P_{{\rm s},n}}{2\pi r^2_n}}e^{-j(\omega_n+\delta_n)\tau_n}s_n = {h}_ns_n,
\end{align}
where $Z_0$ is the vacuum impedance, $\sqrt{P_{{\rm s},n}}$ the transmit power, $r_n$ the link distance, $\tau_n$ the propagation delay, and ${h}_n = \sqrt{\frac{Z_0P_{{\rm s},n}}{2\pi r^2_n}}e^{-j(\omega_n+\delta_n)\tau_n}$ the channel coefficient. 

\item \emph{Sensing signal:} 
As frequency $\omega$ increases, a minute target displacement $d$ will induce a significant phase shift $-\frac{\omega d}{c}$ to the RF signal, with $c$ the speed of light. By simultaneously detecting $N$ frequency bands, multi-band RAREs achieve multi-scale sensing resolution from coarse (e.g., hand tracking) to fine (e.g., finger tracking). Specifically, at band $n$, the sensing target  undergoes a slight displacement, $d_n$, which introduces an additional phase, $-\frac{(\omega_n+\delta_n)}{c}d_n$, to the sensing signal. Thus, $E_{{\rm s},n}$ is formulated as 
\begin{align}\label{eq:dn}
    E_{{\rm s},n} = {h}_ne^{-j\frac{(\omega_n+\delta_n)}{c}d_n}\overset{(a)}{\approx} {h}_ne^{-j\frac{\omega_n}{c}d_n},
\end{align}
{where} the approximation (a) holds because $\omega_n \gg \delta_n$. 
{To avoid phase ambiguity, displacements are constrained to $d_n\in\left[-\frac{\pi c}{\omega_n},\frac{\pi c}{\omega_n}\right]$~\cite{QWC_Cui2025, QuanSense_Zhang2023}.}
\end{itemize}}
 



\subsection{Multi-Level Electron Transition Model}
As depicted in Fig.~\ref{img:RARE-MSAC}, a RARE utilizes the multi-level electron transitions of Rydberg atoms to interact with the multi-band signal, $E_{\rm RF}(t)$. This interaction changes the quantum {states} of Rydberg atoms, resulting in variations in the absorption rate of the atomic ensemble to an incoming laser. By propagating a probe laser through these atoms and then monitoring the output probe-laser power using a photodetector (PD), the quantum state can be read out. This facilitates the detection of multi-band signals, $\{E_{{\rm s},n}\}$, and the recovery of multi-band CommunSense information, $\{s_n\}$ and $\{d_n\}$. The above physical processes are elaborated below. 

\subsubsection{Quantum state} In this system, a probe laser and a coupling laser propagate in opposite directions through a vapor cell filled with alkali atoms to prepare Rydberg atoms. 
As shown in Fig.~\ref{img:RARE-MSAC}(b), the probe laser excites electrons to jump from the ground state, denoted as $\ket{g_0}$, to a lowly excited state, denoted as $\ket{g_1}$. The coupling laser further triggers the electron transition from $\ket{g_1}$ to an initial Rydberg state, $\ket{e_0}$, thereby creating Rydberg atoms. These atoms interact with each frequency band of the incident signal, say band $n$, via the electron transition $\ket{e_0}\rightarrow \ket{e_n}$, $n\in\{1,2,\cdots,N\}$, respectively. 
As a result, the quantum response of each Rydberg atom is characterized by an $(N+3)$-level system. The associated quantum state is thus an $(N+3)\times(N+3)$ Hermitian matrix
    \begin{align}\label{eq:density}
    \boldsymbol{\rho} = \left[
    \begin{array}{cccc}
         \rho_{11}& \rho_{12} & \cdots & \rho_{1,N+3} \\
         \rho_{21}& \rho_{22} & \cdots & \rho_{2, N+3}\\
         \vdots & \vdots &\ddots & \vdots \\
         \rho_{N+3,1} & \rho_{N+3,2} & \cdots & \rho_{N+3,N+3}
    \end{array}
    \right],
\end{align}
satisfying $\Tr{\boldsymbol{\rho}} = 1$ and $\boldsymbol{\rho}\succeq 0$. 

\subsubsection{Hamiltonian operator} The Hamiltonian operator governs electron transitions. 
Specifically, we define $\mu_{g_0g_1}$, $\mu_{g_1e_0}$, and $\mu_{n}$, $\forall n\in\{1,\cdots,N\}$ as the transition dipole moments associated with the electron transitions, $\ket{g_0}\rightarrow \ket{g_1}$, $\ket{g_1}\rightarrow \ket{e_0}$, and $\ket{e_0}\rightarrow \ket{e_n}$, $\forall n$, respectively. Let $\omega_{g_0g_1}$, $\omega_{g_1e_0}$, and $\omega_{e_0e_n}$, $\forall n$ be the transition frequencies of these electron transitions. 
We suppose that the {\color{black}amplitude and frequency} of probe laser (coupling laser) are fixed as $E_{\rm p}$ ($E_{\rm c}$) and $\omega_{\rm p}$ ($\omega_{\rm c}$), respectively. The resonant cases for all electron transitions are considered, meaning that $\omega_{\rm p} = \omega_{g_0g_1}$, $\omega_{\rm c} = \omega_{g_1e_0}$, and $\omega_n = \omega_{e_0e_n},\forall n$. 
Following these definitions, the Rabi frequencies of the probe and coupling lasers, which quantify the strength of the interaction between the signal and the atom, can be represented as~\cite{AtomicPhysics}
\begin{align}
         \Omega_{\rm p} = \frac{\mu_{g_0 g_1}}{\hbar}\left|E_{\rm p}\right|  \quad{\rm and}\quad
    \Omega_{\rm c} = \frac{\mu_{g_1 e_0}}{\hbar} \left|E_{\rm c}\right|,
\end{align}
respectively, where $\hbar$ is the reduced Plank constant.
Furthermore, the Rabi frequency pertaining to each RF signal component, $E_{{\rm RF}, n}(t), \forall n$, is given as 
\begin{align}\label{eq:Rabi}
    \Omega_n &= \frac{\mu_{n}}{\hbar}\left|E_n(t)\right| = \frac{\mu_n}{\hbar}|E_{{\rm r},n} + E_{{\rm s},n}e^{j\delta_n t}|, \notag\\ 
    &\overset{\Delta}{=}|\Omega_{{\rm r},n} + \Omega_{{\rm s},n}e^{j(\delta_n t+\phi_n)}|. 
\end{align}
The variable ($\phi_n = \angle E_{{\rm s},n} - \angle E_{{\rm r},n}$) denotes the phase difference between $E_{{\rm s},n}$ and $E_{{\rm r},n}$. Besides,  $\Omega_{{\rm r},n} \overset{\Delta}{=}\frac{\mu_n}{\hbar}|E_{{\rm r},n}|$ and $\Omega_{{\rm s},n}\overset{\Delta}{=}\frac{\mu_n}{\hbar}|E_{{\rm s},n}|$ refer to the Rabi frequencies of the reference and data signal components, respectively. For convenience in future use, we define the Rabi frequency vectors of the reference {signals}, data {signals}, and composite signals as $\boldsymbol{\Omega}_r = [\Omega_{r,1},\cdots,\Omega_{r,N}]^T$, $\boldsymbol{\Omega}_s = [\Omega_{s,1},\cdots,\Omega_{s,N}]^T$, and $\boldsymbol{\Omega} = [\Omega_{1},\cdots,\Omega_{N}]^T$.  {\color{black}  }
As a result, the Hamiltonian operator for this $(N+3)$-level quantum system is represented as  
\begin{align}\label{eq:hamiltonian}
    {\mb{H}} = \frac{\hbar}{2}\left[
    \begin{array}{ccccccc}
         0 &\Omega_{\rm p} &0 &0 &\cdots & 0  \\
         \Omega_{\rm p} &0 &\Omega_{\rm c} &0 &\cdots & 0 \\
         0 &\Omega_{\rm c} &0 &\Omega_1 &\cdots & \Omega_N\\
         0 &0 &\Omega_1 &0 &\cdots & 0\\
         \vdots &\vdots &\vdots &\vdots &\ddots &\vdots\\
         0 &0 &\Omega_N &0 &\cdots & 0\\
    \end{array}
    \right],
\end{align}
which encompasses the Rabi frequencies of all external signals, $\Omega_{\rm p}$, $\Omega_{\rm c}$, and $\Omega_n, \forall n$.

\subsubsection{Lindblad master equation}
The Hamiltonian operator, ${\mb{H}}$, drives the quantum state, $\boldsymbol{\rho}$, to evolve over time, during which the Rabi frequencies are encoded into the quantum state. To be specific, the evolution follows the Lindblad master equation~\cite{RydNP_Jing2020}
    \begin{align}\label{eq:lindblad}
    \frac{\partial \boldsymbol{\rho}}{\partial t} = \frac{j}{\hbar}[\boldsymbol{\rho},\mb{H}] + \mathcal{L},
\end{align}
where $[\boldsymbol{\rho}, \mb{H}] \overset{\Delta}{=}  \boldsymbol{\rho} \mb{H} - \mb{H}\boldsymbol{\rho}$ represents the commutator between $\boldsymbol{\rho}$ and $\mb{H}$.  
The operator $\mathcal{L}$ characterizes the noncoherent relaxation of the system,  which is given in \eqref{eq:decay}.
\begin{figure*}[ht]
\centering
\begin{align}\label{eq:decay}
    \mathcal{L} = \left[
    \begin{array}{ccccc}
         \gamma_2\rho_{22} + \sum_{i = 4}^{N+3}\gamma_i \rho_{ii}& -\frac{\gamma_2}{2}\rho_{12} & -\frac{\gamma_3}{2}\rho_{13} & \cdots  & -\frac{\gamma_{N+3}}{2}\rho_{1,N+3} \\
         -\frac{\gamma_2 }{2}\rho_{21}& 
         \gamma_3\rho_{33} - \gamma_2\rho_{22}
         & -\frac{\gamma_{23}}{2}\rho_{23} &
         \cdots& -\frac{\gamma_{2,N+3}}{2}\rho_{2,N+3}\\
         -\frac{\gamma_3 }{2}\rho_{31} &
         -\frac{\gamma_{23}}{2}\rho_{32} &
         -\gamma_3\rho_{33} &
         \cdots& -\frac{\gamma_{3,N+3}}{2}\rho_{3,N+3} \\
         \vdots & \vdots & \vdots & \ddots &  \vdots\\
      -\frac{\gamma_{N+3}}{2}\rho_{N+3,1} & -\frac{\gamma_{2,{N+3}}}{2}\rho_{{N+3}, 2} & - \frac{\gamma_{3,N+3}}{2}  \rho_{N+3, 3} & \cdots &-\gamma_{N+3}\rho_{N+3, N+3}
    \end{array}\right].
\end{align}
\hrulefill
\end{figure*}
Here, $\gamma_{ij} = (\gamma_i + \gamma_j)/2$, where $\gamma_i\: (i \in\{2,3,\cdots,N+3\})$ refers to the decay rate of each energy level. In practice, the decay rate $\gamma_2$ is greater than  $\gamma_i\: (i \in\{3,\cdots,N+3\})$ by orders of magnitude~\cite{RydNP_Jing2020}. Thus, \eqref{eq:decay} is usually simplified to 
\begin{align}
    \mathcal{L} = -\frac{\gamma_2}{2}\left[
    \begin{array}{cccccc}
         -2\rho_{22}& \rho_{12} & 0 &\cdots & 0 \\
         \rho_{21}& 
          2 \rho_{22}
         & \rho_{23}  &
         \cdots& \rho_{2,N+3}\\
         0 &
         \rho_{32} &0 &\cdots&0 \\
         \vdots&\vdots&\vdots&\ddots&\vdots\\
      0& \rho_{N+3,2} & 0  &\cdots& 0
    \end{array}
    \right], \notag
\end{align}
without loss of accuracy~\cite{RydNP_Jing2020,liu_continuous-frequency_2022}.

\subsubsection{Measured quantum state}
The evolved quantum states are measured by a PD, which monitors the power loss of the probe laser when passing through the vapor cell. 
Specifically, let $P_{\rm in}$ be the input power of probe laser.
According to the adiabatic approximation, the output power of probe laser exponentially decays with the steady-state solution of the coherence term $\rho_{12}$ (the $(1, 2)$-th entry of $\boldsymbol{\rho}$)~\cite{RydNP_Jing2020}:  \begin{align}\label{eq:laserpower}
    \mathcal{P}(\boldsymbol{\Omega}) = P_{\rm in}\exp(-{C_0\Im{\rho_{12}}}),
\end{align} 
where $\color{black} C_0 \overset{\Delta}{=} \frac{2N_0\mu_{g_0g_1}^2k_pL}{\epsilon_0\hbar\Omega_{\rm p}}$. Here, $N_0$ represents the atomic density, $\epsilon_0$ the vacuum permittivity, $L$ the length of vapor cell, and $k_p$ the wavenumber of probe laser. 

The PD captures the probe laser and converts it into photocurrent:
\begin{align}\label{eq:current}
    I(\mb{\Omega}) = \frac{q\eta}{\hbar\omega_{\rm p}}   \mathcal{P}(\boldsymbol{\Omega}) =  I_{\rm in}\exp(-{C_0\Im{\rho_{12}}}),
\end{align}
Here, $\eta$ is the quantum efficiency of the PD, and $q$ the elementary charge.  The equivalent input photocurrent, $I_{\rm in} \triangleq \frac{q\eta P_{\rm in}}{\hbar\omega_{\rm p}}$, is introduced to characterize the photocurrent induced by the input power of probe laser $P_{\rm in}$.  

\subsection{An Open Problem}
Equations \eqref{eq:Rabi}$\sim$\eqref{eq:current} characterize the microwave-to-optical conversion process of a multi-band RARE: 
\begin{align}\label{eq:mapping}
    \{E_{{\rm s},n}\}_{n = 1}^N \overset{\eqref{eq:Rabi}}{\rightarrow} \boldsymbol{\Omega} \overset{\eqref{eq:lindblad}}{\rightarrow} \rho_{12} \overset{\eqref{eq:laserpower}}{\rightarrow} \mathcal{P}({\boldsymbol{{\Omega}}}) \overset{\eqref{eq:current}}{\rightarrow} I(\mb{\Omega}).
\end{align}
The multi-band RF signals, $\{E_{{\rm s},n}\}$, are initially mapped onto the Rabi frequency vector $\boldsymbol{\Omega}$, which in turn influences the quantum coherence term $\rho_{12}$ and is finally converted into the  laser power, $\mathcal{P}(\boldsymbol{\Omega})$, and photocurrent, $I(\mb{\Omega})$. This cascade enables the simultaneous detection of multiple RF bands from the measured photocurrent. However, obtaining an explicit mapping between $\rho_{12}$ and $\boldsymbol{\Omega}$ requires solving the steady-state solution of the Lindblad master equation \eqref{eq:lindblad}. 
 Although analytical solutions for single‑band RAREs ($N=1$) are well established~\cite{RydNP_Jing2020,liu_continuous-frequency_2022}, no such closed‑form results are available for multi‑band systems with $N > 1$ in the literature. Instead, existing efforts on multi-band RARE~\cite{MBRARE_Holloway2021, RydMultiband_Du2022, Rydmultiband_Meyer2023, DBRARE_Ding2024, Rydmultiband_Allinson2024, photonics10030328, Frequencyhop_Wen2024} rely primarily on experimental or numerical approaches to obtain the quantum state $\rho_{12}$ without solving the Lindblad master equation \eqref{eq:lindblad} analytically.
As a result, the explicit transfer function from $\{E_{s,n}\}_{n = 1}^N$ to $I({\boldsymbol{{\Omega}}})$ remains unknown. Deriving an analytical solution to \eqref{eq:lindblad} is therefore the theme of the following section.


\section{Theoretical Framework of Multi-Band RARE}\label{sec:2}
This section presents a theoretical framework of multi-band RAREs. 
We first derive the steady-state solution of the master equation in \eqref{eq:lindblad} for a multi‑level electron‑transition system.
This solution then serves as the foundation for obtaining the explicit transfer function of a multi‑band RARE.

\subsection{Solving the Lindblad Master Equation}

We focus on the steady-state solution of the coherence term $\rho_{12}$, 
due to its direct relationship with the probe‑laser transmission. The following theorem provides its closed‑form expression.
{\color{black}
\begin{theorem}\rm
    The steady-state solution of $\rho_{12}$ to the Lindblad master equation in \eqref{eq:lindblad} is given by 
\begin{align}\label{eq:state}
    \boxed{\rho_{12} = j\frac{  \rho_0\sum_{n = 1}^N\Omega_n^2}{\sum_{n = 1}^N\Omega_n^2 + \Gamma^2},}
\end{align}
where $\rho_0 \overset{\Delta}{=} \frac{\gamma_2\Omega_{\rm p}}{\gamma_2^2 + 2\Omega_{\rm p}^2}$ and $\Gamma \overset{\Delta}{=} \sqrt{\frac{2\Omega_{\rm p}^2(\Omega_{\rm c}^2 + \Omega_{\rm p}^2)}{\gamma_2^2 + 2\Omega_{\rm p}^2}}$ is related to the electromagnetically induced transparency (EIT) linewidth. 
\end{theorem}
\begin{proof}
    {The expression is derived by imposing the steady-state condition $\frac{\partial \boldsymbol{\rho}}{\partial t} = 0$ and solving the resulting linear system obtained from the master equation \eqref{eq:lindblad}, which comprises $(N+3)^2$ variables corresponding to all elements of the density matrix. }
    {For conciseness, the full derivation is omitted here; though the complete solution for all elements of $\boldsymbol{\rho}$, including $\rho_{12}$, is rigorously validated and provided in Appendix~\ref{app:solution} for reproducibility.}
\end{proof}
}

{Theorem 1 demonstrates that the coherence term $\rho_{12}$  depends explicitly on the sum square of Rabi frequencies $\sum_{n = 1}^N\Omega_n^2$, reflecting the collective contributions from all $N$ frequency bands to the Rydberg state. This collective dependence enables the RARE to extract information concurrently from multiple bands.}
{Moreover, the derived quantum coherence generalizes the well‑known single‑band result~\cite{RydNP_Jing2020,liu_continuous-frequency_2022}}
\begin{align}\label{eq:rho21_2}
    \rho_{12} = j\frac{ \rho_0 \Omega_1^2}{\Omega_1^2 + \Gamma^2}.
\end{align}  
{Notably, setting only one Rabi frequency non-zero ($\Omega_n \neq0$, $\Omega_{n'}=0$, $\forall n'\neq n$) in Theorem 1 exactly recovers the single-band solution in \eqref{eq:rho21_2}. This validates the consistency of our multi‑band generalization.}


\subsection{Transfer Function of Multi-band RAREs}
\subsubsection{Transfer function at the IF band}
Building on the derived steady-state solution, we now proceed to derive the transfer function of a multi‑band RARE by following the signal‑mapping flow outlined in \eqref{eq:mapping}. Substituting \eqref{eq:state} into \eqref{eq:laserpower} yields the explicit expression for the measured photocurrent
\begin{align}\label{eq:y1}
    y(t) &= {I}(\boldsymbol{\Omega}) = I_{\rm in}\exp\left(-\frac{ \chi_0 \sum_{n=1}^N\Omega_n^2}{\sum_{n=1}^N\Omega_n^2 + \Gamma^2}\right),\\&= I_{\rm in}\exp\left(-\frac{  \chi_0\sum_{n=1}^N|\Omega_{{\rm r},n} + \Omega_{{\rm s},n}e^{j(\delta_nt + \phi_n)}|^2}{\sum_{n=1}^N|\Omega_{{\rm r},n} + \Omega_{{\rm s},n}e^{j(\delta_nt + \phi_n)}|^2 + \Gamma^2}\right),\notag
\end{align}
where $\chi_0 \overset{\Delta}{=}\rho_0 C_0$ denotes the laser attenuation coefficient. This transfer function unveils the nonlinear relationship between the photocurrent and multi-band signals.  Such nonlinearity makes it challenging to decipher the data signals $E_{{\rm s},n},\forall n$, especially their phase information. To overcome this, the reference sources are typically placed close to the RARE, ensuring that the reference signal is orders of magnitude stronger than the data signal, e.g., $|E_{{\rm r},n}| \gg |E_{{\rm s},n}|$ and $\Omega_{{\rm r},n}\gg \Omega_{{\rm s},n}$~\cite{Rydphase_Holloway2019}. 
In this strong-reference regime, the data signals act as weak perturbations to the quantum system. 
Consequently, the nonlinear transfer function in \eqref{eq:y1} can be accurately linearized via a first-order Taylor expansion, greatly simplifying signal detection. Specifically, the derivative of $I(\boldsymbol{\Omega})$ with {respect} to (w.r.t.) $\Omega_{n},\forall n$ is expressed as: 
\begin{align}\label{eq:gradient}\color{black}
    \frac{\partial I(\boldsymbol{\Omega})}{\partial \Omega_n} = -I(\boldsymbol{\Omega})\frac{2\chi_0\Gamma^2\Omega_n}{(\sum_{m=1}^N\Omega_m^2 + \Gamma^2)^2}. 
\end{align}
Using \eqref{eq:gradient}, the linearized photocurrent can be expressed as shown in \eqref{eq:y2}.
\ifx\onecol\undefined
\begin{figure*}[ht]\color{black}
\centering
    \begin{align} \label{eq:y2}
    y(t) &\approx \underbrace{I(\boldsymbol{\Omega}_{r})}_{\rm DC\:bias} + \sum_{n = 1}^N\underbrace{\frac{\partial I(\boldsymbol{\Omega}_{r})}{\partial \Omega_{{\rm r},n}}\frac{\mu_n}{\hbar}}_{{\rm Gain\:for\:band}\:n} \underbrace{\Re{E_{{\rm s},n}e^{j(\delta_n t - \angle E_{{\rm r},n})}}}_{{\rm Data\:signal\:at\:band\:}n} \overset{\Delta}{=}I_{\rm r} - \sum_{n = 1}^N\kappa_n {\Re{E_{{\rm s},n}e^{j(\delta_n t - \angle E_{{\rm r},n})}}}.
\end{align}
\hrulefill
\end{figure*}
\else 
\begin{align} \label{eq:y2}
    y(t) &\approx \underbrace{\mathcal{P}(\boldsymbol{\Omega}_{r})}_{\rm DC\:bias} + \sum_{n = 1}^N\underbrace{\frac{\partial \mathcal{P}(\boldsymbol{\Omega}_{r})}{\partial \Omega_{{\rm r},n}}\frac{\mu_n}{\hbar}}_{{\rm Gain\:for\:band}\:n} \underbrace{\Re{E_{{\rm s},n}e^{j(\delta_n t - \angle E_{{\rm r},n})}}}_{{\rm Data\:signal\:at\:band\:}n} \overset{\Delta}{=}I_{\rm r} + \sum_{n = 1}^N\kappa_n {\Re{E_{{\rm s},n}e^{j(\delta_n t - \angle E_{{\rm r},n})}}}
\end{align}
\fi 
 It is observed that the resulting photocurrent consists of three {components}: 1) a direct current (DC) bias $I_{\rm r} \overset{\Delta}{=}I(\boldsymbol{\Omega}_{r})$ generated by the strong reference signal; 
 {2) a band-specific amplification coefficient}
 \begin{align}
 \color{black} \kappa_n \overset{\Delta}{=} - \frac{\partial I(\boldsymbol{\Omega}_{r})}{\partial \Omega_{{\rm r},n}}\frac{\mu_n}{\hbar} = I(\boldsymbol{\Omega}_r)\frac{2\chi_0\Gamma^2\mu_n\Omega_{{\rm r},n}}{\hbar(\sum_{m=1}^N\Omega_{r,m}^2 + \Gamma^2)^2},
 \end{align} which quantifies the intrinsic gain of the RARE for the 
$n$-th band; 
 and 
 3) the modulated data signal, $\Re{E_{{\rm s},n}e^{j(\delta_n t - \angle E_{{\rm r},n})}}$, which carries the information payload.
From \eqref{eq:y2}, we can thus identify the physical significance of a multi‑band RARE.

\begin{figure*}
    \centering
    \includegraphics[width=6.5in]{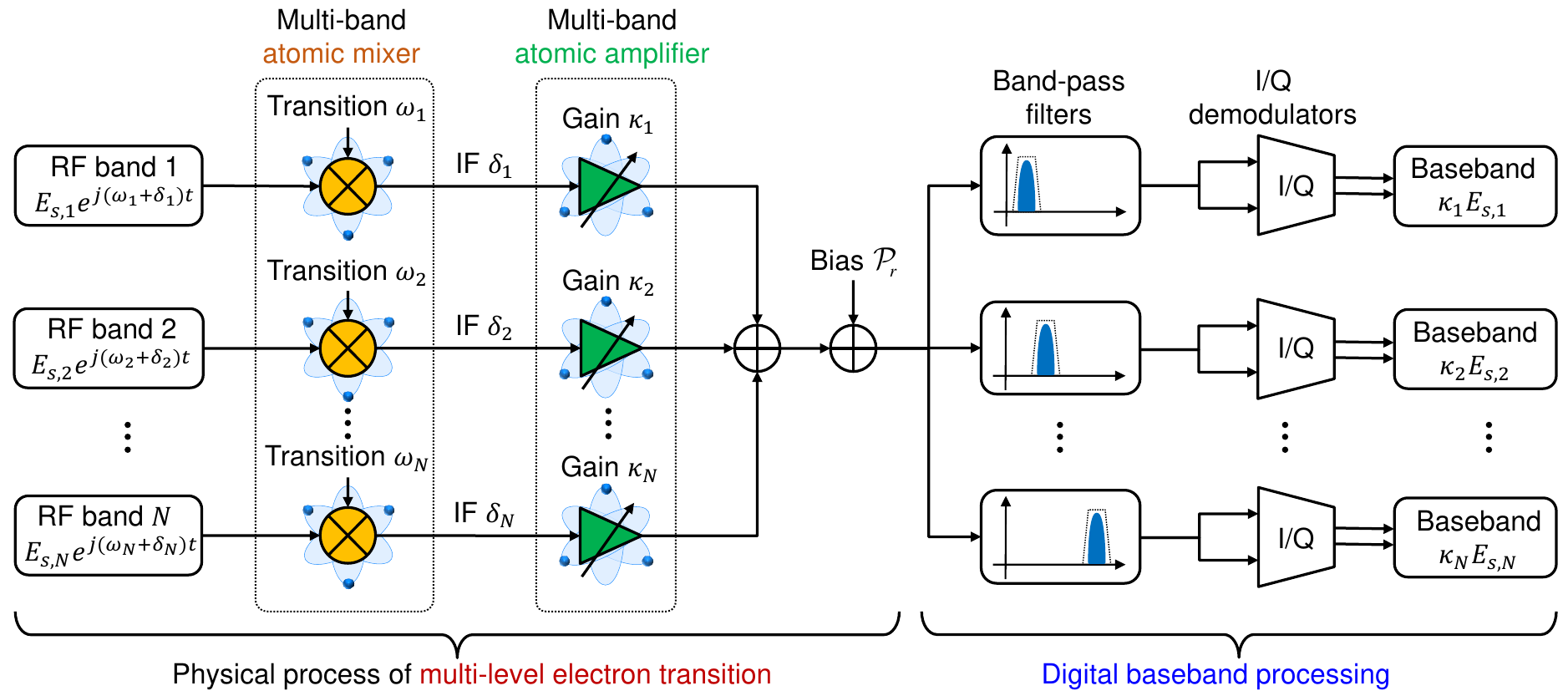}
    \caption{Equivalent signal transmission model of a multi-band RARE.} 
	\vspace*{-1em}
	\label{img:RARE}
\end{figure*}

\begin{remark} \rm
\emph{\textbf{(Physical significance of a multi-band RARE)}}  A multi‑band RARE functions as both a multi‑band atomic mixer and a multi‑band atomic amplifier, as illustrated in Fig.~\ref{img:RARE}. The detection mechanism proceeds as follows:
\begin{itemize}
    \item \textbf{Frequency mixing:} At each band $(\omega_n + \delta_n)$, the data signal couples to the electron transition $\ket{e_0}\rightarrow\ket{e_n}$, inducing coherent down‑conversion via the transition frequency $\omega_{e_0e_n} = \omega_n$. This process produces an IF component at $\delta_n$.
    \item \textbf{Amplification:} Each IF component is then amplified by the band‑specific intrinsic gain $\kappa_n$. This gain factor determines the RARE’s sensitivity to the $n$-th RF band.
    \item \textbf{Signal aggregation:} The amplified IF components from all bands, together with the DC bias, superimpose in the measured photocurrent $I(\boldsymbol{\Omega})$, enabling simultaneous multi-band signal recovery.
\end{itemize}
\end{remark}

The derived linear transfer function is in accordance with the existing single-band~\cite{RydNP_Jing2020,liu_continuous-frequency_2022} and multi-band studies~\cite{MBRARE_Holloway2021, RydMultiband_Du2022, Rydmultiband_Meyer2023, DBRARE_Ding2024, Rydmultiband_Allinson2024, photonics10030328, Frequencyhop_Wen2024}. First, it generalizes the linear transfer function established in the single-band system~\cite{RydNP_Jing2020,liu_continuous-frequency_2022}. When $N = 1$, the band-specific intrinsic gain, $\kappa_n$, naturally reduces to the single-band intrinsic gain derived in~\cite{RydNP_Jing2020,liu_continuous-frequency_2022}.   Second, the transfer function aligns with the multi-band experimental results. For example, the received signal contains multiple IF components, producing multiple spectral peaks as illustrated in Fig.~3 of~\cite{Rydmultiband_Meyer2023}. The gain factor, $\kappa_n$, for a given band depends on the Rabi frequencies (field strengths) across all bands, consistent with the coupled sensitivity observed in~\cite{photonics10030328}.

\subsubsection{Transfer function at the digital baseband}
Digital baseband processing is then applied to convert the measured IF components into the baseband. 
To prevent inter-band interference, it is essential to make the IF components, $\Re{E_{{\rm s},n}e^{j(\delta_n t + \phi_n)}}$, mutually orthogonal. To this end, the 
{separation}
between adjacent IFs, $\frac{\delta_{n + 1} - \delta_n}{2\pi}$, should exceed the average occupied bandwidth, 
$\frac{B_n + B_{n + 1}}{2}$, by a sufficient guard interval. For instance, when $N = 3$, one may choose $\delta_1 = 2\pi\times100\:{\rm kHz}$, $\delta_2 = 2\pi\times200\:{\rm kHz}$, $\delta_3 = 2\pi\times300\:{\rm kHz}$, and set $B_1 = B_2 = B_3 = 50\:{\rm kHz}$. Under this condition, each band’s data signal can be independently recovered using band‑pass filters and I/Q demodulators, as illustrated in the right part of Fig.~\ref{img:RARE}. By further including system noise, the baseband signal originating from the $n$-th RF band thus becomes
\begin{align}
    y_n = \kappa_n E_{{\rm s},n} + z_n,\:\forall n\in\{1,2,\cdots, N\}.
\end{align}
 Here, $z_n$ denotes the noise, whose distribution will be discussed in next sub-section.
The outputs $\{y_n\}$ enable multi-band CommunSense applications, as detailed in Section~\ref{sec:4}.


\subsection{Multi-Band Noise Characterization}
Finally, we characterize the noise affecting the detected signal 
$y_n$
  by extending the single‑band model of~\cite{RydbergNoise_2024}. Following~\cite{RydbergNoise_2024}, the noise 
$z_n$
  comprises two key components: intrinsic noise and extrinsic noise. 

The intrinsic noise arises primarily from photon shot noise (PSN) in the PD~\cite{RydbergNoise_2024}, a consequence of the discrete nature of photon arrivals. {\color{black} The equivalent noise power within bandwidth $B_n$ for band $n$ is proportional to the DC bias of photocurrent, $I_{\rm r}$.} Accordingly, it is given by 
\begin{align}
    \sigma^2_{{\rm I}, n} = q I_{\rm r}B_n.
\end{align}

The extrinsic noise primarily originates from the blackbody radiation at an ambient temperature $T_{\rm a}$ and the vacuum fluctuation~\cite{RydbergNoise_2024}.  
 It contaminates the wireless signal over the air, and is thereby amplified by the intrinsic gain of RARE.  In the multi-band RARE system, we adapt the extrinsic‑noise model of~\cite{RydbergNoise_2024} by replacing the single‑band gain with the band‑specific gain factor $\kappa_n$. {\color{black} The resulting extrinsic noise power at band $n$ is given as (see derivation details in Appendix \ref{app:bbr})
\begin{align}
    \sigma^2_{{\rm E},n} = \kappa_n^2B_n\frac{\hbar\omega_n^3({2n_{\rm th}} + {1})}{\pi \epsilon_0 c^3}\overset{\Delta}{=}\kappa_n^2B_nS_n,
\end{align}
where $\epsilon_0$ is the {\color{black} vacuum permittivity}, $k_B$ the Boltzmann constant, and $n_{\rm th} = {1}/{\left(e^{\hbar \omega_n/k_B T_{\rm a}} - 1\right)}$ the Bose-Einstein distribution.  }


Based on the preceding analysis, the overall noise $z_n$ is modeled as additive
Gaussian noise, i.e., $z_n \sim \mathcal{CN}(0,\sigma_n^2)$, with
\begin{align}
    \sigma_n^2 = \sigma^2_{{\rm I}, n} + \sigma^2_{{\rm E},n} = qI_{\rm r}B_n + \kappa_n^2B_nS_n.  
\end{align}

{\color{black} Notice that technical noise contributions, like the laser intensity noise, PD electrical noise, and the LO phase noise, are not included in our model. While valuable for experimental implementations, these noises are often implementation‑dependent and can be substantially reduced through established techniques (e.g., by locking lasers to ultra‑stable cavities or by employing phase‑stabilization modules). Our focus remains on the fundamental physical noises, the PSN and extrinsic noise, which cannot be mitigated through receiver design alone and therefore set the ultimate sensitivity limit of a Rydberg receiver.
}

\section{Sensitivity Analysis and Maximization for Multi-Band RAREs}\label{sec:3}
In this section, we maximize the sensitivity of a multi-band RARE by optimizing the multi-band atomic amplifier through manipulation of the reference‑signal Rabi frequencies. We first derive an explicit expression for the sensitivity. Then, the concepts of effective Rabi frequency and Rabi attention are introduced to decouple the sensitivity into a tractable form. Finally, we present the optimal effective Rabi frequency that maximizes sensitivity.
\subsection{Sensitivity Analysis}
We use the signal-to-noise ratio (SNR) of the baseband signal, $y_n$, to quantify the sensitivity of a multi-band RARE. For the $n$-th band, the SNR is given by
\begin{align}\color{black}
    {\rm SNR}_n = \frac{\kappa_n^2 |E_{{\rm s},n}|^2}{\kappa_n^2B_nS_n + qI_{\rm r}B_n}.  
\end{align}
It is observed that ${\rm SNR}_n$ increases monotonically w.r.t. the ratio $\frac{\kappa_n^2}{I_{\rm r}}$. In this ratio, 
both the intrinsic gain $\kappa_n = -\frac{\partial I(\boldsymbol{\Omega}_{r})}{\partial \Omega_{{\rm r},n}}\frac{\mu_n}{\hbar}$ and DC bias $I_{\rm r} = I(\boldsymbol{\Omega}_r)$ are affected by the Rabi frequency vector $\boldsymbol{\Omega}_{r}$ of reference signals. Hence, the sensitivity of RARE can be maximized by appropriately configuring the values of  $\boldsymbol{\Omega}_{r}$ through the transmission power control at the reference sources.
However, the multi-band gain factors, $\{\kappa_1,\kappa_2,\cdots,\kappa_N\}$,  are mutually coupled due to their intricate dependence on all Rabi frequencies, as shown in \eqref{eq:y1} and \eqref{eq:gradient}. This coupling makes the maximization of ${\rm SNR}_n$ nonconvex and challenging. 
To address this issue, we propose a method to decouple the gain factor $\kappa_n$ from its joint dependence on $\boldsymbol{\Omega}_r$, thereby enabling tractable optimization.

\subsubsection{Decoupling of the gain factor $\kappa_n$} The concepts of \emph{effective Rabi frequency} and \emph{Rabi attentions} are introduced to decouple the gain factor $\kappa_n$.  

\begin{definition}[Effective Rabi frequency and Rabi attentions]\rm 
The effective Rabi frequency is defined as the root‑sum‑square of the Rabi frequencies of the reference signals: 
    \begin{align}
 \Omega_{\rm eff} \overset{\Delta}{=} \left({\sum_{n = 1}^N \Omega_{{\rm r},n}^2}\right)^{1/2}.
\end{align}
The Rabi attention for the $n$-th band is defined as the fraction of the squared Rabi frequency contributed by that band:
\begin{align}
    \alpha_n\overset{\Delta}{=}\frac{\Omega_{{\rm r},n}^2}{\sum_{m = 1}^N\Omega_{r,m}^2} = \frac{\Omega_{{\rm r},n}^2}{\Omega_{\rm eff}^2}, \forall n,
\end{align}
which satisfies $\sum_{n=1}^N \alpha_n = 1$.
\end{definition}

{\color{black}
\begin{remark}[Physical significance of effective Rabi frequency and Rabi attention]\rm 
The effective Rabi frequency quantifies the cumulative strength of all reference signals driving electron transitions between Rydberg states. 
The Rabi attention $\alpha_n$ describes how this overall strength is distributed among frequency bands and, correspondingly, how the electron population is distributed among the Rydberg levels $\ket{e_n}$. A larger $\alpha_n$ indicates a stronger reference signal at the $n$-th band, which promotes a higher fraction of electrons into the associated state $\ket{e_n}$.
This is reflected in the steady-state populations $\rho_{n+3,n+3}$ (see Appendix~A), which are proportional to their respective Rabi attentions $\alpha_n$:
\begin{align}
    \rho_{n + 3,n + 3} = C \alpha_n, 
\end{align}
where $C = \frac{\Omega_{\rm p}^2(\Omega_{\rm p}^2 + \Omega_{\rm c}^2)}{(\gamma_2^2 + 2\Omega_{\rm p}^2)\Omega_{\rm eff}^2 + 2\Omega_{\rm p}^2(\Omega_{\rm p}^2 + \Omega_{\rm c}^2)}$ is normalization constant. Thus, increasing $\alpha_n$ enhances the occupation of level $\ket{e_n}$ relative to other Rydberg levels, demonstrating how Rabi attention shapes the energy‑level occupation.
\end{remark}
}

Clearly, the optimization of Rabi frequencies $\{\Omega_{{\rm r},1}, \Omega_{{\rm r}, 2}, \cdots, \Omega_{{\rm r}, N}\}$ is equivalent to optimizing the effective Rabi frequency together with the Rabi attentions. This is attributed to the one-to-one mapping 
\begin{align}
    \Omega_{{\rm r},n} = \Omega_{\rm eff}\sqrt{\alpha_n},\:\forall n. 
\end{align}
Utilizing this equivalence, the DC bias $I_{\rm r}$ depends solely on the effective Rabi frequency:
\begin{align}\label{eq:laserpower2}
    I_{\rm r} = I(\boldsymbol{\Omega}_r) = P_{\rm in}\exp\left(-\frac{\chi_0\Omega_{\rm eff}^2}{\Omega_{\rm eff}^2 + \Gamma^2}\right).
\end{align} 
Moreover, the intrinsic gain, $\kappa_n$, for band $n$ can be recast as 
\begin{align}\label{eq:decoupling_kappa}
    \kappa_n &= I_{\rm r}\frac{2\chi_0\Gamma^2\Omega_{{\rm r},n}}{(\sum_{m=1}^N\Omega_{r,m}^2 + \Gamma^2)^2}\frac{\mu_n}{\hbar} =I_{\rm r}\frac{2\chi_0\Gamma^2\Omega_{\rm eff}\sqrt{\alpha_n}}{(\Omega_{\rm eff}^2 + \Gamma^2)^2}\frac{\mu_n}{\hbar}, \notag \\
    & =\frac{2\chi_0\Gamma^2}{\hbar}I_{\rm r}\frac{{\Omega_{\rm eff}}}{(\Omega_{\rm eff}^2+\Gamma^2)^2}\sqrt{\alpha_n}\mu_n \overset{\Delta}{=} \varrho_0\sqrt{\alpha_n}\mu_n,
\end{align}
where  
\begin{equation}\label{eq:Global_gain}
    \varrho_0\overset{\Delta}{=}\frac{2\chi_0\Gamma^2}{\hbar}I_{\rm r}\frac{\Omega_{\rm eff}}{(\Omega_{\rm eff}^2+\Gamma^2)^2}.
\end{equation}
Equation \eqref{eq:decoupling_kappa} decouples the intrinsic gain factor  into three components: the transition dipole moment $\mu_n$, the parameter $\varrho_0$ termed as the \emph{global gain}, and {\color{black} the square root of Rabi attention $\sqrt{\alpha_n}$}. This decomposition enables the decoupled representation of the multi‑band atomic amplifier illustrated in Fig.~\ref{img:decoupling}, which we interpret in Remark~3.

\begin{figure}
    \centering
    \includegraphics[width=3.5in]{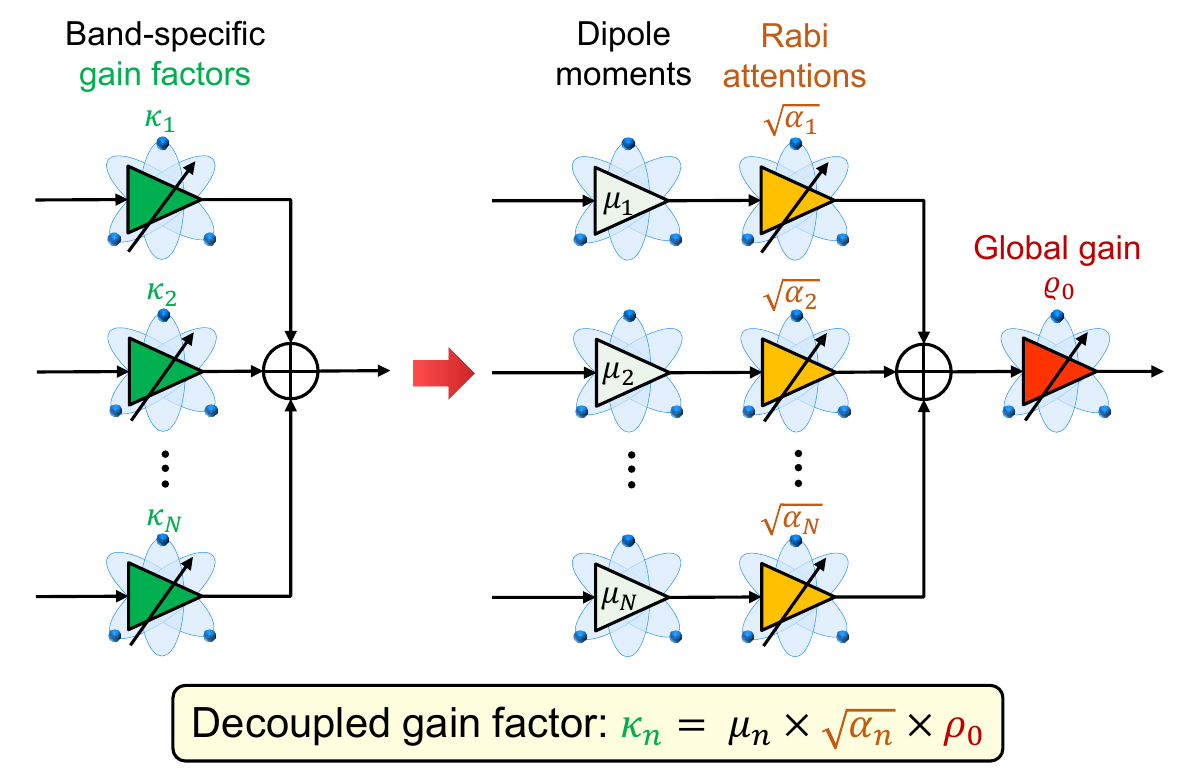}
    \vspace*{-1em}
    \caption{\color{black} Decoupling of the multi-band atomic amplifier.} 
	\vspace*{-1em}
	\label{img:decoupling}
\end{figure}

\begin{remark}[Decoupled band-specific gain factors]\rm~
\begin{itemize}
    \item \textbf{Global gain:} The global gain $\varrho_0$ depends solely on the effective Rabi frequency $\Omega_{\rm eff}$, which remains consistent across all frequency bands. This parameter determines the overall amplification gain, or equivalently the overall sensitivity, of the RARE to all frequency bands. 
    Because the effective Rabi frequency $\Omega_{\mathrm{eff}}$ and the Rabi attentions $\alpha_n$ are independent, the global gain can be optimized separately from the band‑specific attention weights.
    \item \textbf{Square root of Rabi attention:} The Rabi attentions, $\{\alpha_n\}$, dictate how the RARE allocates its overall sensitivity to different frequency bands.  This allocation constitutes a distinctive
    \emph{attention mechanism} enabled by multi-level electron transitions. 
    Increasing $\alpha_n$ raises the electron population in the corresponding Rydberg state $\ket{e_n}$ relative to other states, thereby enhancing sensitivity to the $n$‑th band while reducing sensitivity to other bands.
    This explains the experimentally observed sensitivity tradeoff of multi-band RAREs~\cite{photonics10030328}.
    \item \textbf{Transition dipole moment:}  
    {The transition dipole moments, $\{\mu_n\}$, are constants determined by the Rydberg atoms' inherent response to each band. These parameters remain fixed, reflecting fundamental atomic properties.}
\end{itemize} 
\end{remark}



As a result, the decoupling of band-specific gain factors allows us to \emph{independently} optimize the global gain, $\varrho_0$, and the Rabi attentions, $\{\alpha_n\}$. This is instrumental in solving the sensitivity maximization problem.

\subsubsection{SNR analysis}
Following the decoupled $\kappa_n$ in \eqref{eq:decoupling_kappa}, the SNR at band $n$ is reformulated as 
\begin{align}\label{eq:SNR2}
    {\rm SNR}_n = \frac{\alpha_n\mu_n^2 |E_{{\rm s},n}|^2}{\alpha_n \mu_n^2 B_n S_n + \frac{I_{\rm r}}{\varrho_0^2}qB_n}. 
\end{align}
Several insights can be drawn from \eqref{eq:SNR2}. {\color{black} Firstly, the SNR is monotonically increasing w.r.t. the global gain normalized by the DC bias, $\frac{\varrho_0^2}{I_{\rm r}}$.} 
The normalization arises from the PSN, that is linearly related to $I_{\rm r}$. Similar to the discussion in Remark~3, the normalized global gain, $\frac{\varrho_0^2}{I_{\rm r}}$, serves as a global parameter dependent only on the effective Rabi frequency. Increasing its value allows for a uniform enhancement of the SNRs across all frequency bands. Therefore, the maximization of all ${\rm SNR}_n,\forall n$ is achieved only if $\frac{\varrho_0^2}{I_{\rm r}}$ is maximized. 
Secondly, the Rabi attentions, $\alpha_n, \forall n$, dictate the RARE's emphasis on each frequency band. By increasing $\alpha_n$, the RARE can amplify the signal, $E_{{\rm s},n}$, with a larger gain. However, this also results in an increase in the extrinsic noise, {\color{black} $\sigma_{{\rm E},n}^2$}, within the same band. 


\subsection{Sensitivity Maximization} 
The optimal effective Rabi frequency $\Omega_{\rm eff}=\sqrt{\sum_{n=1}^N\Omega_{{\rm r},n}^2}$ is derived in the sequel to maximize the normalized global gain  $\frac{\varrho_0^2}{I_{\rm r}}$, {thereby achieving} the maximal {sensitivity across} all frequency bands. 
Notably, the optimization of the Rabi attentions {$\{\alpha_n\}$} is 
contingent on application-specific performance metrics for CommunSense, which is detailed in Section~\ref{sec:4}.

Specifically, the optimal effective Rabi frequency that maximizes the sensitivity obeys
\begin{align}\label{eq:P0}
    \max_{\Omega_{\rm eff} > 0}\:\frac{\varrho_0^2}{I_{\rm r}},
\end{align}
where 
\begin{align}\label{eq:objective}
    \frac{\varrho_0^2}{I_{\rm r}} = \frac{4\chi_0^2\Gamma^4 P_{\rm in}}{\hbar^2}\exp\left(-\frac{\chi_0\Omega_{\rm eff}^2}{\Omega_{\rm eff}^2 + \Gamma^2}\right)\frac{\Omega_{\rm eff}^2}{(\Omega_{\rm eff}^2 + \Gamma^2)^4}.
\end{align}
Theorem~2 provides the closed-form global optimum for \eqref{eq:P0}, which governs the fundamental sensitivity limit of a multi-band RARE. 
\begin{theorem}\rm
    The optimal $\Omega_{\rm eff}$ to the problem \eqref{eq:P0} is given as 
    \begin{align}\label{eq:A}
        \Omega_{\rm eff}^\star =  \sqrt{\frac{\chi_0+4 - \sqrt{\chi_0^2 + 4\chi_0 + 16}}{\chi_0 -4 + \sqrt{\chi_0^2 + 4\chi_0 + 16}}}\Gamma. 
    \end{align}
\end{theorem}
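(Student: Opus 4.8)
The plan is to treat \eqref{eq:P0} as a scalar maximization over $\mathcal{A}\in(0,\infty)$ and to locate its unique interior stationary point. Since the prefactor $\frac{4\chi_0^2\Gamma^4 P_{\rm in}}{\hbar^2}$ in \eqref{eq:objective} is a positive constant independent of $\mathcal{A}$, maximizing $\frac{\varrho_0^2}{\mathcal{P}_r}$ is equivalent to maximizing
\begin{align}
    g(\mathcal{A}) \overset{\Delta}{=} \exp\left(-\frac{\chi_0\mathcal{A}}{\mathcal{A}+\Gamma^2}\right)\frac{\mathcal{A}}{(\mathcal{A}+\Gamma^2)^4}.
\end{align}
Because $g(\mathcal{A})>0$ throughout the domain, I would pass to $\ln g$ to convert the product into a sum, obtaining $\ln g(\mathcal{A}) = -\frac{\chi_0\mathcal{A}}{\mathcal{A}+\Gamma^2} + \ln\mathcal{A} - 4\ln(\mathcal{A}+\Gamma^2)$, which is much cleaner to differentiate.

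Next I would compute the stationarity condition. Using $\frac{d}{d\mathcal{A}}\!\left[\frac{\mathcal{A}}{\mathcal{A}+\Gamma^2}\right] = \frac{\Gamma^2}{(\mathcal{A}+\Gamma^2)^2}$, setting $(\ln g)'(\mathcal{A})=0$ gives
\begin{align}
    \frac{1}{\mathcal{A}} - \frac{4}{\mathcal{A}+\Gamma^2} - \frac{\chi_0\Gamma^2}{(\mathcal{A}+\Gamma^2)^2} = 0.
\end{align}
Multiplying through by the strictly positive quantity $\mathcal{A}(\mathcal{A}+\Gamma^2)^2$ and collecting terms reduces this to the quadratic
\begin{align}
    3\mathcal{A}^2 + (\chi_0+2)\Gamma^2\mathcal{A} - \Gamma^4 = 0.
\end{align}
The product of its roots equals $-\Gamma^4/3<0$, so exactly one root is positive, guaranteeing a unique admissible stationary point. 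Applying the quadratic formula and simplifying the discriminant via $(\chi_0+2)^2\Gamma^4 + 12\Gamma^4 = (\chi_0^2+4\chi_0+16)\Gamma^4$ yields the positive root $\mathcal{A}^\star = \frac{\sqrt{\chi_0^2+4\chi_0+16}-(\chi_0+2)}{6}\Gamma^2$, which is strictly positive because $\chi_0^2+4\chi_0+16>(\chi_0+2)^2$.

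It then remains to confirm global optimality and to reconcile the cleaner root with the stated form. For optimality I would inspect the boundary behavior: as $\mathcal{A}\to 0^+$ the leading factor $\mathcal{A}$ forces $g\to 0$, and as $\mathcal{A}\to\infty$ the rational part decays like $\mathcal{A}^{-3}$ while the exponential tends to the finite limit $e^{-\chi_0}$, so again $g\to 0$. Since $g$ is smooth and strictly positive in the interior with a single critical point there, that point must be the global maximizer. Finally, I would match the two forms by cross-multiplication, i.e.\ verifying $\bigl(\sqrt{\chi_0^2+4\chi_0+16}-(\chi_0+2)\bigr)\bigl(\sqrt{\chi_0^2+4\chi_0+16}+\chi_0-4\bigr) = 6\bigl(\chi_0+4-\sqrt{\chi_0^2+4\chi_0+16}\bigr)$, which follows immediately from expanding and using $\bigl(\sqrt{\chi_0^2+4\chi_0+16}\bigr)^2 = \chi_0^2+4\chi_0+16$; this reproduces \eqref{eq:A}. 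The only mildly delicate step is this closing rationalization, since the target form in \eqref{eq:A} is written as a ratio rather than the natural quadratic-formula output; the differentiation, the reduction to a quadratic, and the boundary-based optimality argument are all routine.
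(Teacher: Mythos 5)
Your proposal is correct and follows essentially the same route as the paper: reduce \eqref{eq:P0} to a scalar maximization of $g(\mathcal{A})=e^{-\chi_0\mathcal{A}/(\mathcal{A}+\Gamma^2)}\mathcal{A}/(\mathcal{A}+\Gamma^2)^4$, show the first-order condition collapses to a quadratic with a unique positive root, and conclude global optimality from the sign change of the derivative together with the vanishing boundary behavior. The only difference is one of execution rather than of idea --- the paper first substitutes $x=\mathcal{A}/(\mathcal{A}+\Gamma^2)$ so the objective becomes $\Gamma^{-6}e^{-\chi_0 x}x(1-x)^3$ and the quadratic $\chi_0x^2-(\chi_0+4)x+1=0$ yields \eqref{eq:A} directly, whereas you differentiate $\ln g$ in the original variable, obtain $3\mathcal{A}^2+(\chi_0+2)\Gamma^2\mathcal{A}-\Gamma^4=0$, and recover the stated form after the closing rationalization, which I have checked and which does hold.
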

\begin{proof}
    (See Appendix C). 
\end{proof}
Utilizing Theorem 2, the normalized global gain $(\frac{\varrho_0}{I_{\rm r}})^{\star}$  that maximizes sensitivity is acquired. By substituting $\Omega_{\rm eff}^\star$ into \eqref{eq:laserpower2} and \eqref{eq:Global_gain}, we can also obtain the corresponding DC bias and global gain, denoted as $I_{\rm r}^{\star}$ and $\varrho_0^{\star}$ respectively. 

{\color{black} 
\begin{figure}
    \centering
    \includegraphics[width=3.3in]{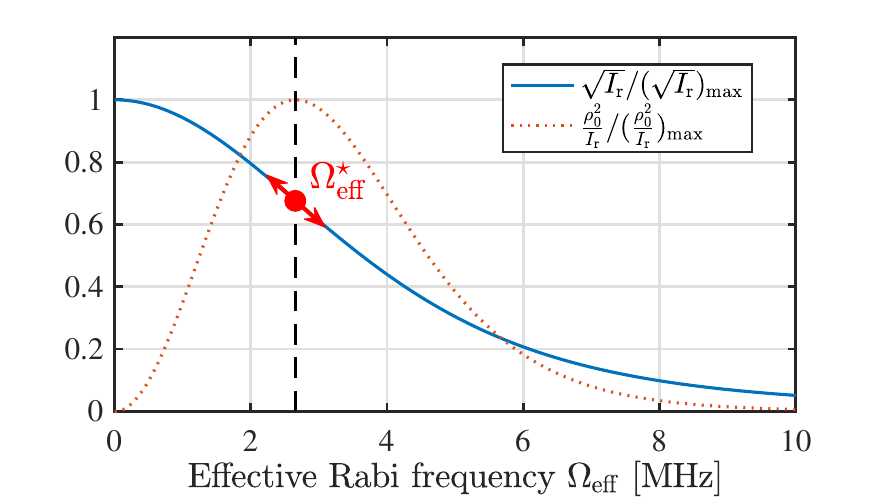}
    \caption{The influence of effective Rabi frequency on the $\sqrt{I_{\rm r}}$ curve and the normalized global gain ${\rho_0^2}/{I_{\rm r}}$, both of which are normalized by their maximal values $(\sqrt{I_{\rm r}})_{\rm max}$ and $({\rho_0^2}/{I_{\rm r}})_{\rm max}$.
    } 
	\vspace*{-1em}
	\label{img:Rabi_eff}
\end{figure}

To gain deeper insights into the optimal effective Rabi frequency, we derive an equivalent expression of $\frac{\varrho_0^2}{I_{\rm r}}$: 
\begin{align}\label{eq:insight_effective}
    \frac{\varrho_0^2}{I_{\rm r}} = \frac{4P_{\rm in}}{\hbar^2} \left(\frac{{\rm d}}{{\rm d} \Omega_{\rm eff}}e^{-\frac{\chi_0\Omega_{\rm eff}^2}{2(\Omega_{\rm eff}^2 + \Gamma^2)}}\right)^2 \propto \left(\frac{{\rm d}\sqrt{I_{\rm r}}}{{\rm d} \Omega_{\rm eff}}\right)^2.
\end{align}
Equation \eqref{eq:insight_effective} indicates that the optimal effective Rabi frequency captures the \emph{steepest slope} of the square root of DC bias, $\sqrt{I_{\rm r}}$, as presented in Fig.~\ref{img:Rabi_eff}. This is because the data signals are small perturbations relative to the reference counterparts, causing the probe laser to fluctuate around the DC bias. The fluctuation amplitude, which reflects a RARE's sensitivity, is intrinsically linked to the gradient of the DC bias $I_{\rm r}$. Moreover, as the power of PSN is also proportional to the DC bias, the gradient of $I_{\rm r}$ is calibrated as the gradient of $\sqrt{I_{\rm r}}$ when measuring SNR. Consequently, the optimal sensitivity is achieved where the rate of change in $\sqrt{I_{\rm r}}$ w.r.t $\Omega_{\rm eff}$ is maximized, not simply where the change in $I_{\rm r}$ is largest. This approach connects naturally with established single-band optimization strategies.
In the prior single-band study~\cite{RydNP_Jing2020}, the Rabi frequency was optimized by seeking the steepest point on the EIT curve, which directly relates to the curve of DC bias. The method inherently avoids EIT saturation by operating away from the fully transparent regime. Our discussion advances~\cite{RydNP_Jing2020} by generalizing Rabi frequency to the effective counterpart and balancing the impacts of EIT response (DC bias) and PSN. 
}

Furthermore, Theorem 2 provides a practical guideline for implementing multi-band reference sources: achieving maximum sensitivity requires the weighted sum of reference signal powers to be constant. This is evident from the definition of effective Rabi frequency:
\begin{align}\color{black}
   (\Omega_{\rm eff}^\star)^2 =  \sum_{n = 1}^N\frac{\mu_n^2}{\hbar^2}|E_{{\rm r},n}|^2  = \text{constant}.
\end{align}
This condition arises because the cumulative contributions from all frequency bands must precisely tune the effective Rabi frequency to the steepest point on the $\sqrt{I_{\rm r}}$ curve.
Allowing any individual frequency band to become too strong would drive the system away from this optimum.
On the other hand, although the effective Rabi frequency is constrained, the distribution of reference signal powers among frequency bands remains adjustable.  By manipulating the ratio ${\Omega_{{\rm r},n}^2}/{\Omega_{\rm eff}^2}$, we can tailor the Rabi attentions ${\alpha_n}$, creating additional degrees of freedom for further optimizing CommunSense performance.


\section{Applications of Multi-Band RAREs}\label{sec:4}
In this section, we apply multi-band RAREs to {CommunSense} scenarios. 
The optimal effective Rabi frequency $\Omega_{\rm eff}^\star$ derived in Section \ref{sec:3}-B is adopted, while the optimal Rabi attentions $\{\alpha_n\}$ for each CommunSense scenario are systematically designed. Our analysis establishes the performance limits for concurrent detection of the CommunSense information, $\{s_n\}$ in \eqref{eq:sn} and $\{d_n\}$ in \eqref{eq:dn}, by RAREs.

\subsection{Multi-Band Communications with RAREs}
Consider the multi-band communication system. A single RARE is employed to capture information from $N$ heterogeneous communication devices.
Leveraging the derived transfer function of multi-band RARE with the maximal sensitivity, the received communication signal at band $n$ is obtained as
\begin{align}\label{eq:}
    y_n = \varrho_0^\star \sqrt{\alpha_n}\mu_n {h}_ns_n +z_n.
\end{align}
Following this signal model, the multi-band symbols $\{s_n\}$ can be recovered through the maximum-likelihood (ML) {detector}:
\begin{align}
    \hat{s}_n = \arg\min_{s_n \in \mathbb{S}_n}\left\|{y_n} - \varrho_0^\star \sqrt{\alpha_n}\mu_n {h}_n s_n \right\|,\forall n,
\end{align}

To evaluate this multi-band communication system, we use the SE as the performance metric. The established signal model allows us to explicitly express the SE as 
\begin{align}
    \mathcal{C} &= {\sum_{n = 1}^NB_n\log_2(1 + {\rm SNR}_n)}/{\sum_{n = 1}^N B_n} \notag\\&= \sum_{n = 1}^N\gamma_n\log_2\left(1 + \frac{\alpha_n\mu_n^2{\varrho_0^\star}^2|{h}_n|^2}{\alpha_n\mu_n^2{\varrho_0^\star}^2 B_n S_n + qI_{\rm r}^\star B_n}\right),
\end{align}
where $\gamma_n\overset{\Delta}{=}\frac{B_n}{\sum_{m=1}^NB_m}$ denotes the proportion of bandwidth. 
For ease of discussion, the transmission power $P_{{\rm s},n}$ and bandwidth $B_n$ are assumed to be fixed. We aim at maximizing the SE by optimizing Rabi attentions, $\{\alpha_n\}$, {under the constraints $\sum_{n=1}^N\alpha_n=1$ and $\alpha_n \ge 0, \forall n$}, to approach the performance limit of multi-band communications. 
{Using the definitions $\beta_n \overset{\Delta}{=}\frac{|{h}_n|^2}{B_n S_n}$ and $\epsilon_n \overset{\Delta}{=}\frac{qI_{\rm r}^\star}{\mu_n^2{\varrho_0^\star}^2S_n}$ can simplify the problem to:}
\begin{align}
    \max_{{\{\alpha_n \ge 0\}_{n=1}^N}} \:\:&\sum_{n=1}^N\gamma_n\log_2\left(1 + \frac{\beta_n\alpha_n}{\alpha_n + \epsilon_n}\right), \label{eq:ASE} \\
    {\rm s.t.}\:\:&\sum_{n=1}^N\alpha_n = 1, \tag{\ref{eq:ASE}a} \label{eq:ASEa}
\end{align}
Notice that the formulated attention allocation problem is analogous to classical multiple-input-multiple-output (MIMO) power allocation. The key distinctions lie in the conversion from total power constraint to the total attention constraint $\sum \alpha_n = 1$, as well as the influence of external-noise power, which is scaled by the attention weight $\alpha_n$. 

To establish that a global optimum exists, we prove the convexity of problem \eqref{eq:ASE}. First, the equality $\sum_{n=1}^N\alpha_n=1$ and the non-negativity conditions $\alpha_n\geq0$ are linear and thus convex.
Then, the Hessian matrix is negative definite because the second partial derivatives of the objective function are 
\begin{align}\label{eq:H1}
\color{black}\frac{\partial{\mathcal{C}^2}}{\partial \alpha_n^2} = -\frac{\gamma_n\beta_n\epsilon_n(2(1\!+\!\beta_n)\alpha_n\! +\! (2\!+\!\beta_n)\epsilon_n)}{(\alpha_n + \epsilon_n)^2((1+\beta_n)\alpha_n+\epsilon_n)^2\ln 2\:}<0,
\end{align}
and $\frac{\partial{\mathcal{C}^2}}{\partial \alpha_n\partial\alpha_m} = 0, \:\:\forall n\neq m$.  Therefore, problem \eqref{eq:ASE} is a convex optimization problem, and its global optimum can be obtained by solving the Karush-Kuhn-Tucker (KKT) conditions, leading to the following theorem.
\begin{theorem}\rm
    The optimal Rabi attentions, $\alpha_n,\:\forall n\in\{1,2,\cdots, N\}$, that solve problem \eqref{eq:ASE} satisfy
    \begin{align}
        \alpha_n^\star \!=\! \left(\frac{-(2\!+\!\beta_n)\epsilon_n \!+\! \sqrt{\beta_n^2\epsilon_n^2 \!+\! 4\nu^\star\gamma_n\beta_n(1\!+\!\beta_n)\epsilon_n}}{2(1\!+\!\beta_n)} \right)^+\!\!\!,
\end{align}
where $(x)^+ \overset{\Delta}{=}\max(0, x)$, and the non-negative Lagrange multiplier $\nu^\star$ is properly selected to ensure $\sum_{n = 1}^N\alpha_n^\star = 1$.
\end{theorem}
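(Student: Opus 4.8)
The plan is to build directly on the concavity established just above: because the objective of \eqref{eq:ASE} is concave (its Hessian \eqref{eq:H1} is diagonal with strictly negative entries) and the feasible set cut out by \eqref{eq:ASEa}--\eqref{eq:ASEb} is convex, the KKT conditions are necessary and sufficient for global optimality, so it suffices to exhibit a point satisfying them. First I would attach a multiplier $\nu$ to the equality constraint \eqref{eq:ASEa} and multipliers $\lambda_n \ge 0$ to the nonnegativity constraints \eqref{eq:ASEb}, so that stationarity reads $\partial\mathcal{C}/\partial\alpha_n - \nu + \lambda_n = 0$ together with complementary slackness $\lambda_n\alpha_n = 0$ for every $n$.

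The key computation is the partial derivative of $\mathcal{C}$. Rewriting the log-argument as $1 + \tfrac{\beta_n\alpha_n}{\alpha_n+\epsilon_n} = \tfrac{(1+\beta_n)\alpha_n + \epsilon_n}{\alpha_n + \epsilon_n}$ and differentiating collapses the two rational pieces into the single clean form
\begin{align}
    \frac{\partial \mathcal{C}}{\partial \alpha_n} = \frac{\gamma_n}{\ln 2}\cdot\frac{\beta_n\epsilon_n}{\big((1+\beta_n)\alpha_n + \epsilon_n\big)(\alpha_n + \epsilon_n)}.
\end{align}
For an \emph{active} band ($\alpha_n^\star > 0$, hence $\lambda_n = 0$) stationarity forces this derivative to equal $\nu$; clearing denominators yields a quadratic in $\alpha_n$,
\begin{align}
    (1+\beta_n)\alpha_n^2 + (2+\beta_n)\epsilon_n\alpha_n + \epsilon_n^2 - \frac{\gamma_n\beta_n\epsilon_n}{\nu\ln 2} = 0.
\end{align}
Selecting the larger root and absorbing the constant $1/(\nu\ln 2)$ into a renamed multiplier $\nu^\star$ reproduces the stated expression; the discriminant simplifies thanks to the identity $(2+\beta_n)^2 - 4(1+\beta_n) = \beta_n^2$, which is exactly what produces the $\beta_n^2\epsilon_n^2$ term under the square root.

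To justify the clamp $(\cdot)^+$ I would handle the \emph{inactive} bands through complementary slackness. Evaluating the derivative at $\alpha_n = 0$ gives $\gamma_n\beta_n/(\epsilon_n\ln 2)$, so the sign requirement $\lambda_n = \nu - \partial\mathcal{C}/\partial\alpha_n|_{0}\ge 0$ holds exactly when the quadratic's constant term is nonnegative, which is precisely the regime in which the positive root computed above turns nonpositive. Hence replacing a nonpositive root by $0$ is not an ad hoc truncation but the correct KKT response, and the formula with $(\cdot)^+$ yields a valid KKT point for every fixed multiplier. Finally I would pin down $\nu^\star$ from the budget \eqref{eq:ASEa}: each $\alpha_n^\star$ is continuous and strictly monotone in $\nu^\star$, sweeping the whole range from $0$ to $\infty$, so $\sum_n\alpha_n^\star$ is continuous and monotone and attains the value $1$ at a unique $\nu^\star$---the standard water-filling argument. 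I expect the main obstacle to be bookkeeping rather than conceptual: verifying the correct (positive) root selection and checking that the $(\cdot)^+$ clamping coincides exactly with the inactive-band KKT inequality; the existence and uniqueness of $\nu^\star$ then follow from a routine monotonicity argument.
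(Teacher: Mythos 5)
Your proposal is correct and follows essentially the same route as the paper's Appendix C: form the Lagrangian, reduce stationarity to the quadratic $(1+\beta_n)\alpha_n^2 + (2+\beta_n)\epsilon_n\alpha_n + \epsilon_n^2 - \nu\gamma_n\beta_n\epsilon_n = 0$, keep the larger root, clamp with $(\cdot)^+$, and tune the multiplier to meet $\sum_n\alpha_n^\star = 1$ (the paper writes the multiplier as $1/\nu$ and absorbs the $\ln 2$, exactly as you do). Your treatment is in fact slightly more careful than the paper's, since you justify the $(\cdot)^+$ clamp explicitly via the complementary-slackness multipliers $\lambda_n$ and argue existence of $\nu^\star$ by monotonicity, both of which the paper asserts without detail.
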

\begin{proof}
    (See Appendix D).
\end{proof}
Due to its mathematical structure, Theorem 3 can be interpreted as a \emph{receiver-side water-filling principle}, analogous to classical power allocation in MIMO systems but adapted to the quantum domain of attention distribution.
Combining Theorem 2 and Theorem 3 yields the optimal design of Rabi frequencies {of the reference signals:} $ \Omega_{{\rm r},n}^\star = \Omega_{\rm eff}^\star\sqrt{\alpha_n^\star}$, where $\Omega_{\rm eff}^{\star}$ (Theorem 2) ensures maximal sensitivity, and $\alpha_n^{\star}$ (Theorem 3) optimizes SE. This joint optimization enables the multi-band RARE to achieve the maximal capacity in concurrent multi-band communications. 

It needs to be emphasized that the metric of SE is just one possible criterion for multi-band RAREs. Many other metrics could also be considered, such as energy efficiency, user fairness, and secure communication rate. These metrics will result in different attention distribution, while the optimal effective Rabi frequency can be kept as \eqref{eq:A} because its design is independent of the specific metrics.

\subsection{Multi-Band Sensing with RAREs}
{Multi-band RAREs can enable multi-granularity wireless sensing by exploiting the frequency-dependent phase sensitivity of multi-band signals.} In this scenario, 
our objective is to recover the displacements, $\{d_n\}$, of all sensing targets from the received signal, $\{y_n\}$, formulated as 
\begin{align}\label{eq:yn_sense}
    y_n = \varrho_0^\star \sqrt{\alpha_n}\mu_n {h}_ne^{-j\frac{\omega_n}{c}d_n} +z_n.
\end{align}
To this end, we use the ML detector to estimate $d_n$ as 
\begin{align}
    \hat{d}_n = -\frac{c}{\omega_n}\angle\left(\frac{y_n}{\varrho_0^\star\sqrt{\alpha}_n\mu_n{h}_n}\right), \forall n.
\end{align}

{\color{black} The sensing accuracy of $\hat{d}_n$ can be assessed using NCRLB, defined as the Cramér-Rao lower bound of $d_n$ normalized by the variance of $d_n$.} The NCLRB offers a theoretical lower bound of the normalized mean square error when estimating $d_n$. To be specific, we assume $d_n$ a uniform distribution, $d_n \sim \mathcal{U}[-\frac{\pi c}{\omega_n}, \frac{\pi c}{\omega_n}]$, yielding $\mathbb{E}(|d_n|^2) = \frac{\pi^2 c^2}{3\omega_n^2}$. Following the derived signal model in \eqref{eq:yn_sense}, the NCRLB of $d_n$ is formulated as 
\begin{align}
   {\rm NCRLB}(d_n) &= \frac{{\rm CRLB}(d_n)}{\mathbb{E}(|d_n|^2)}
    = \frac{1}{\mathbb{E}(|d_n|^2)}
   \frac{c^2}{\omega_n^2}\frac{1}{2{\rm SNR}_n},\\
   &= \frac{3B_nS_n}{2\pi^2|{h}_n|^2} + \frac{ 3qI_{\rm r}B_n}{2\pi^2\alpha_n\mu_n^2{\varrho_0^\star}^2 |{h}_n|^2}. \notag
\end{align}
The sum of all NCRLBs is adopted as a performance metric to evaluate the overall sensing error, which is minimized by optimizing the Rabi attentions. 
The notation $\xi_n \overset{\Delta}{=} \frac{ 3qI_{\rm r}B_n}{2\pi^2\mu_n^2{\varrho_0^\star}^2 |{h}_n|^2}$ is introduced to simplify the expression of the formulated problem as 
\begin{align}
    \min_{{\{\alpha_n \ge 0\}_{n=1}^N}} \:\:\sum_{n=1}^N\frac{\xi_n}{\alpha_n},  
    \:\:{\rm s.t.}\sum_{n=1}^N\alpha_n = 1,  \label{eq:CRLB} 
\end{align}
The term $\frac{3B_nS_n}{2\pi^2 |{h}_n|^2}$ in ${\rm NCRLB}(d_n)$ is omitted as it is irrelevant to $\alpha_n$. 
This optimization problem is shown to be convex, and its global optimal solution can be readily derived 
by solving KKT conditions
as 
\begin{align}\label{eq:alpha2}
    \alpha_n^\star = \frac{\sqrt{\xi_n}}{\sum_{m = 1}^N\sqrt{\xi_m}}. 
\end{align}
As a result, by configuring the Rabi frequencies as $\Omega_{{\rm r},n}^\star = \Omega_{\rm eff}^\star\sqrt{ {\alpha}^\star_n}$ using the optimal effective Rabi frequency in Theorem 2 and the optimal Rabi attentions in \eqref{eq:alpha2}, the maximal sensing accuracy averaged over multiple frequency bands is achieved.

\section{Validation Results}\label{sec:6}

 In this section, we present validation results.  In sub-section \ref{sec:6A}, we conduct experiments on a multi-band RARE hardware platform to validate the discovered Rabi attention mechanism. Then, in sub-sections \ref{sec:6B} and \ref{sec:6C}, simulations are 
 performed to numerically validate the derived signal model and the multi-band CommunSense performance of RAREs. 

{\color{black}

\subsection{Experimental Validation for Rabi Attention Mechanism}\label{sec:6A}
This subsection presents a proof-of-principle experiment designed to validate the key theoretical findings of our study.
While the multi-band mixing capability of RAREs has been extensively documented in prior works~\cite{MBRARE_Holloway2021, RydMultiband_Du2022, Rydmultiband_Meyer2023, DBRARE_Ding2024, Rydmultiband_Allinson2024, photonics10030328, Frequencyhop_Wen2024}, we focus here on investigating a novel aspect of multi-band RARE systems: the attention mechanism.

Our objective is to experimentally verify the linear relationship between the squared intrinsic gain, $\kappa_n^2$, and the Rabi attention, $\alpha_n = \Omega_{{\rm r}, n}^2/\Omega_{\rm eff}^2$. This relationship can be equivalently demonstrated by showing that the DC bias $I_{\rm r}$ depends exclusively on the effective Rabi frequency $\Omega_{\rm eff} = \sqrt{\sum_{n = 1}^N\Omega_{{\rm r}, n}^2}$, as evidenced by the following derivation: 
\begin{align}\notag
    \kappa_n^2 \propto \left(\frac{\partial I_{\rm r}}{\partial \Omega_{{\rm r}, n}}\right)^2 \overset{(a)}{=} \left(\frac{\partial I_{\rm r}}{\partial \Omega_{\rm eff}^2} 
    \frac{\partial \Omega_{\rm eff}^2}{\partial \Omega_{{\rm r}, n}}
    \right)^2 = \left(2\frac{\partial I_{\rm r}}{\partial \Omega_{\rm eff}^2} \Omega_{\rm eff}
    \right)^2\alpha_n.
\end{align}
Here, equation (a) holds if and only if  $I_{\rm r}$ relates to $\Omega_{{\rm r}, n}$ through $\Omega_{\rm eff}^2$. Furthermore, noting that the square of Rabi frequency in each band $n$ is linearly proportional to the transmit power $P_{{\rm r}, n}$ of the reference source:
$\Omega_{{\rm r}, n}^2 = \frac{\mu_n^2}{\hbar^2} |E_{{\rm r}, n}|^2 \propto P_{{\rm r}, n}$, we can reformulate $\Omega_{\rm eff}$ as $\Omega_{\rm eff} = \sqrt{\sum_{n = 1}^Nc_nP_{{\rm r}, n}}$, where $c_n$ is a constant weight. The proof-of-principle experiment thus reduces to verifying that the DC bias $I_{\rm r}$ is a function of the weighted sum of reference source powers
\begin{align}
    I_{\rm r}  = I_{\rm r}\left(\sum_{n =1}^Nc_nP_{{\rm r}, n}
    \right).
\end{align}
Equivalently, the contours of $I_{\rm r}$ in the variable space $\{P_{{\rm r}, 1}, P_{{\rm r}, 2}, \cdots, P_{{\rm r}, N}\}$ should form straight lines described by \begin{align}
    \sum_{n =1}^Nc_nP_{{\rm r}, n} = W,
\end{align}
where $W$ controls the contour offset.

\subsubsection{Hardware Platform}
\begin{figure}
    \centering
    \includegraphics[width=3in]{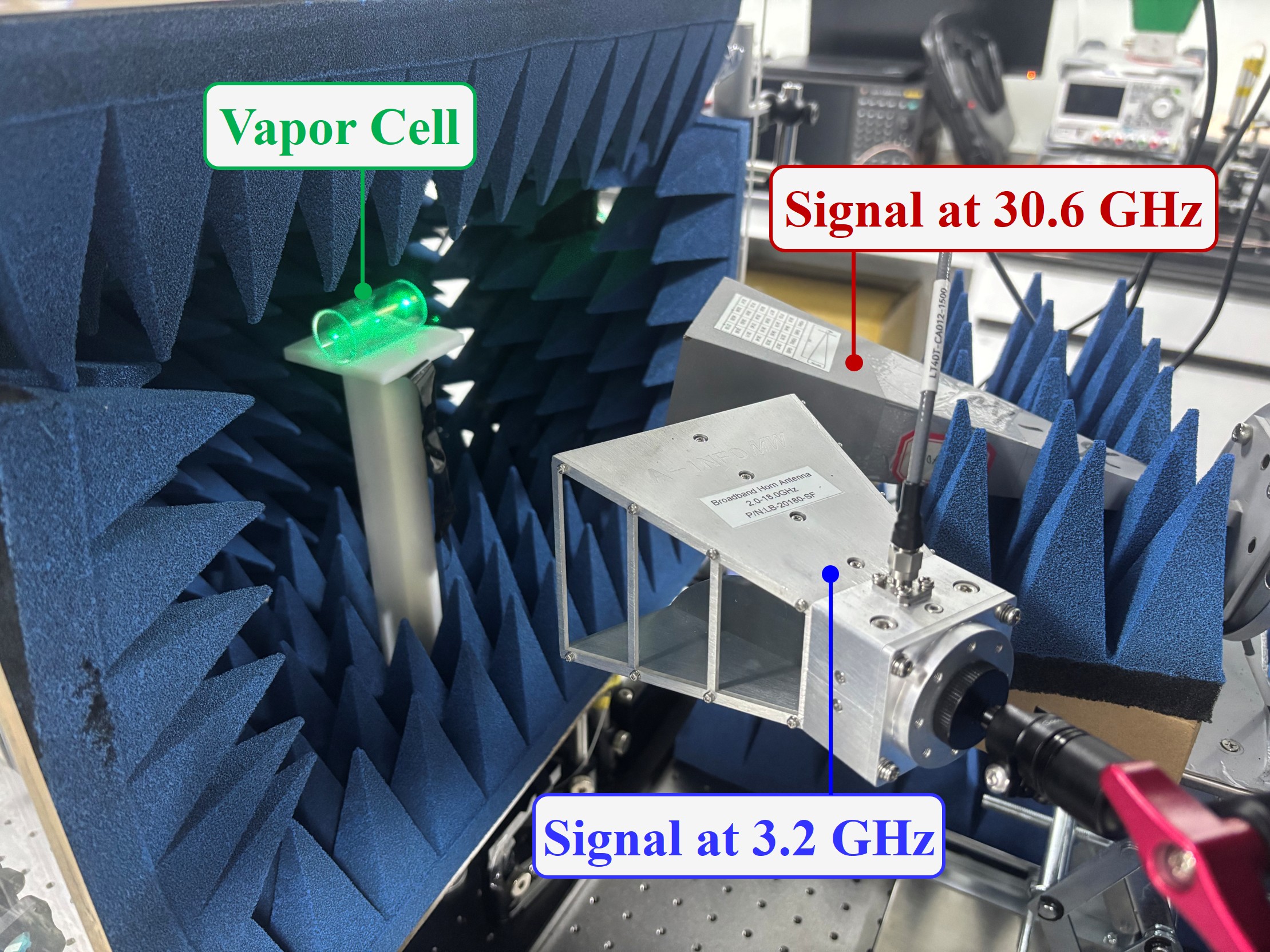}
    \caption{Experiment platform.} 
	\vspace*{-1em}
	\label{img:exp_setup}
\end{figure}

The dual-band experimental setup is illustrated in Fig.~\ref{img:exp_setup}. A cylindrical cesium-133 vapor cell of length 5~cm and diameter 2.5~cm is used. Experiments are conducted in a microwave anechoic chamber to mitigate environmental electromagnetic interference. A probe laser at $852\:{\rm nm}$ excites atoms from the ground state $\ket{g_1} = 6S_{1/2}$ to the lowly excited state $\ket{g_2} = 6P_{3/2}$. A 509~nm coupling laser further drives the transition to the initial Rydberg state $\ket{e_0} = 60D_{5/2}$. The probe laser operates at a power of $100 \mu {\rm W}$ with a beam waist of 1.0~mm, while the coupling laser delivers $40{\rm mW}$ with a beam waist of 1.3~mm. Two signal generators (Ceyear 1466H-V and 1466H-G) serve as the sub-6G and mmWave reference sources. They respectively operate at $\omega_{e_0e_1} = 2\pi \times 3.212\:{\rm GHz}$ and $\omega_{e_0e_2} = 2\pi\times 30.618\:{\rm GHz}$, which trigger transitions to Rydberg states $\ket{e_1} = 61P_{3/2}$ and $\ket{e_2} = 62P_{3/2}$. The RF signals are emitted via two horn antennas (LB-20180-SF for sub-6G and LB-34-25-A2 for mmWave), placed 20~cm from the vapor cell, with their polarization aligned parallel to that of the coupling laser. Variations in the probe laser transmission induced by the dual-band signals are converted into a DC bias $I_{\rm r}$ via a photodetector (PDA36A2, Thorlabs). This signal is recorded as an output voltage $V_{\rm r} = I_{\rm r}R_{\rm L}$ by a digital oscilloscope (MSO56B, Tektronix) at a sampling rate of $100{\rm MS/s}$, where $R_{\rm L} = 2{\rm k\Omega}$ is the load resistance.  
During experiments, the transmit power $P_{{\rm r}, 1}$ of the sub-6G source is varied from $10\:\mu{\rm W}$ to $50\:\mu{\rm W}$, while that at the mmWave band, $P_{{\rm r}, 2}$, is increased from $50\:\mu{\rm W}$ to $150\:\mu{\rm W}$. The resulting DC voltages $V_{\rm r}$ are recorded to characterize their dependence on the transmit powers $\{P_{{\rm r}, 1}, P_{{\rm r}, 2}\}$.

\subsubsection{Experiment results}
\begin{figure}[t!]
    \centering
    \subfigure[3D view]
    {\includegraphics[width=0.49\columnwidth]{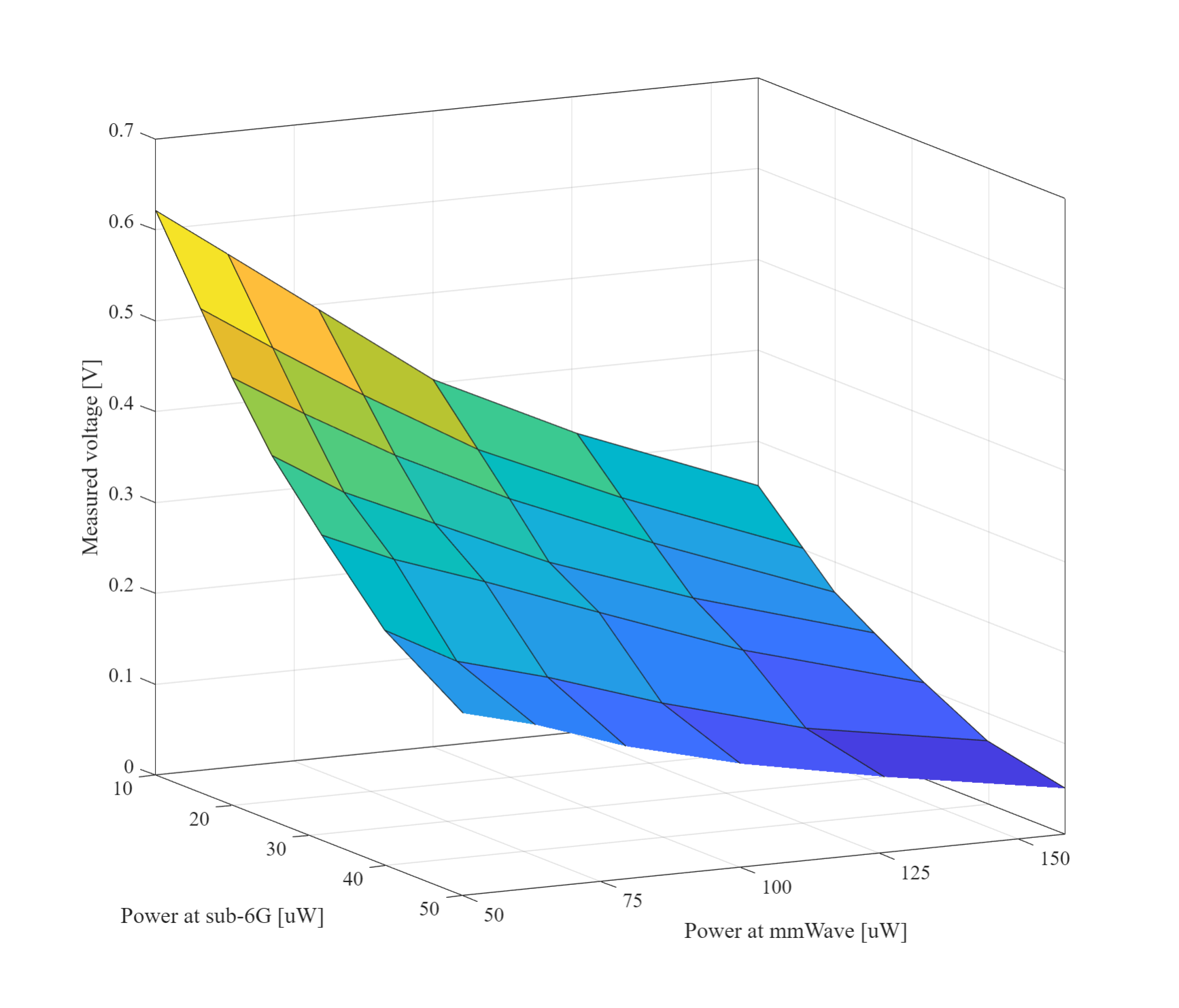}}
    \subfigure[Top view]
    {\includegraphics[width=0.49\columnwidth]{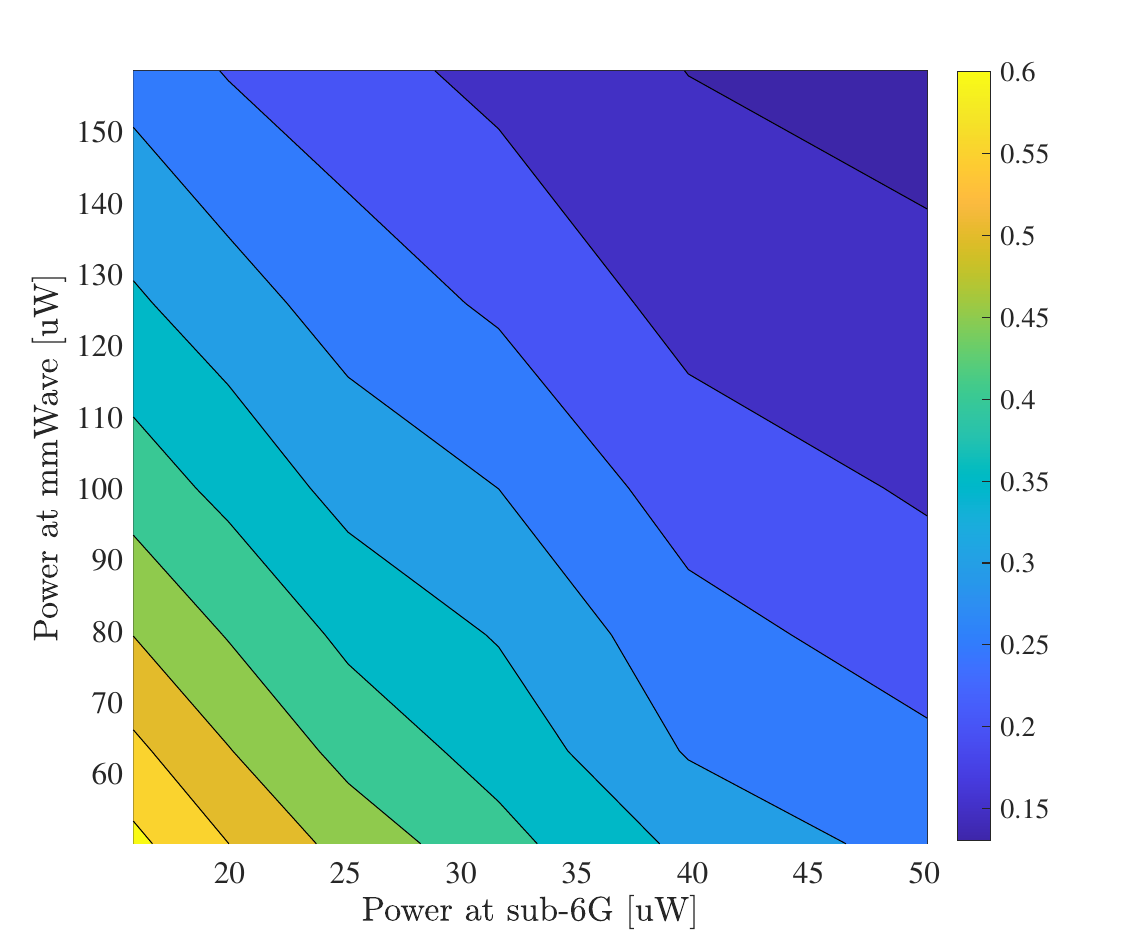}}
    \subfigure[sub-6G]
    {\includegraphics[width=3.3in]{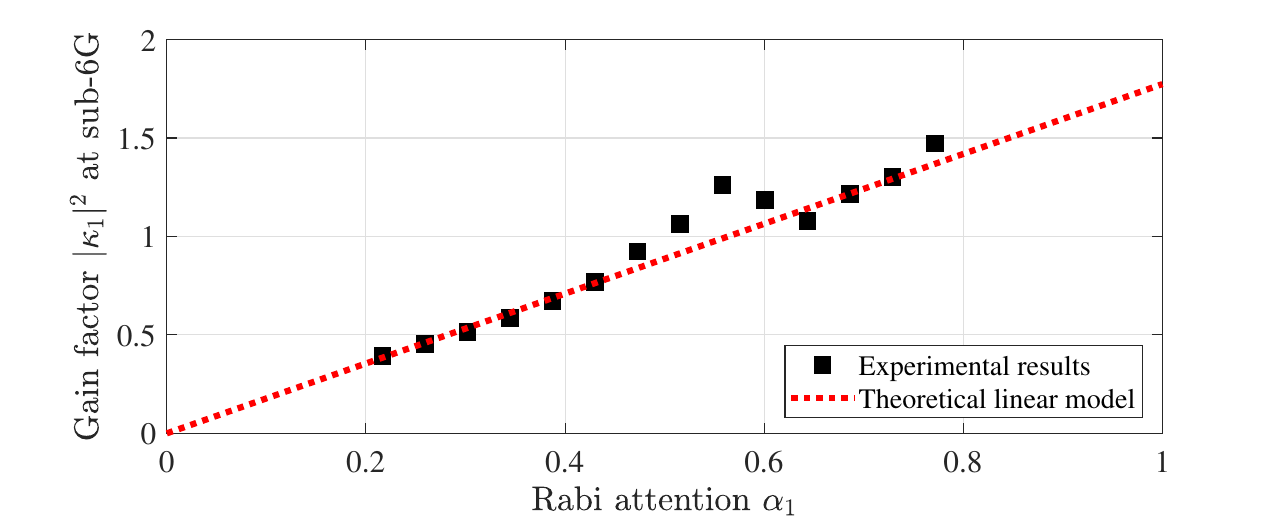}}
    \subfigure[mmWave]
    {\includegraphics[width=3.3in]{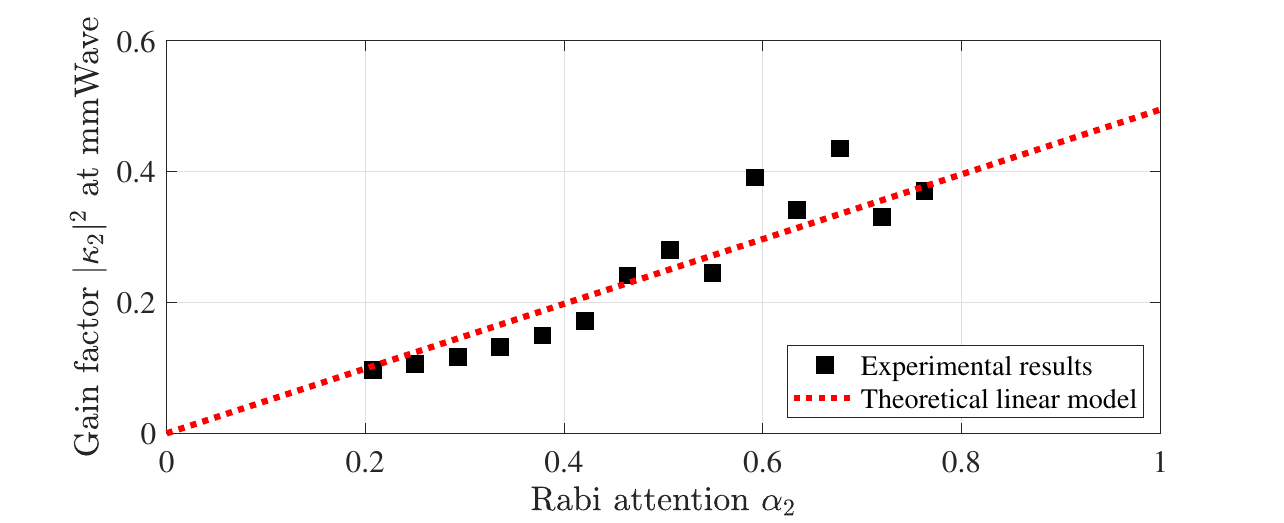}} 
    \caption{\color{black} (a)-(b) Measured output voltage versus transmit powers of the dual-band signals. (c)-(d) Measured band-specific gain factor as a function of Rabi attention. }
	\vspace*{-1em}
	\label{img:exp} 
\end{figure}

Fig.~\ref{img:exp} presents the experiment results.  The measured DC voltage as a function of the transmit powers of the applied dual-band fields is illustrated in subfigures (a) and (b). The voltage decreases with increasing transmit power in either the sub-6G or mmWave band, consistent with the theoretical prediction in Eq.~\eqref{eq:y1}, indicating a monotonic reduction in photocurrent w.r.t Rabi frequency. The top view in Fig.~\ref{img:exp}(b) clearly shows that the contours of the resulting voltage approximate straight lines with a uniform slope. This observation confirms the dependence of DC bias on the weighted sum of transmit powers of reference signals. 
Curve fitting yields the following family of straight-line contours:
\begin{align}
    3.31P_{{\rm r}, 1} + P_{{\rm r}, 2} = W\notag.
\end{align}
{\color{black} The weight 3.31 arises from the combined effect of several physical factors that scale the transmitted power to the Rabi frequency, including the transition dipole moment, the antenna gain, the path loss, and the attenuation at the surface of vapor cell.} 
A slight deviation between the measured and fitted contours is observed in high-power regions (e.g., $P_{{\rm r}, 1}>40\:\mu{\rm W}$ and  $P_{{\rm r}, 2}>130\:\mu{\rm W}$). This is owing to the reduced DC bias under strong reference signals, which increases susceptibility to measurement noise.
Furthermore, subfigures (c) and (d) depict the squared intrinsic gains, $\kappa_1^2$ and $\kappa_2^2$, as functions of the Rabi attentions, determined respectively by $\alpha_1 = \frac{\Omega_{{\rm r}, 1}^2}{\Omega_{{\rm r}, 1}^2 + \Omega_{{\rm r}, 2}^2} = \frac{3.31 P_{{\rm r}, 1}}{3.31 P_{{\rm r}, 1} + P_{{\rm r}, 2}}$  and $\alpha_2 = \frac{\Omega_{{\rm r}, 2}^2}{\Omega_{{\rm r}, 1}^2 + \Omega_{{\rm r}, 2}^2} = \frac{P_{{\rm r}, 2}}{3.31 P_{{\rm r}, 1} + P_{{\rm r}, 2}}$. The intrinsic gains are obtained by evaluating the gradient of DC voltage $V_{\rm r}$ in Fig.~\ref{img:exp}(a) along the contour $3.31 P_{{\rm r}, 1} + P_{{\rm r}, 2} = 130\:{\mu {\rm W}}$, which emulates a fixed effective Rabi frequency. As shown in Fig.~\ref{img:exp}(c)-(d), the experimental data points align closely with the theoretically predicted linear relationships (red curves). {\color{black} The normalized-mean-square fit residuals, evaluated by $\frac{\mathrm{E}({\rm gain}_{\rm measure} - {\rm gain}_{\rm fit})^2}{\mathrm{E}({\rm gain}_{\rm fit})^2}$, for the linear relationships are calculated as 0.0099 and 0.0278 respectively. These small residuals confirm the high quality of the linear fits and demonstrate that, under a fixed effective Rabi frequency, the sensitivity of a RARE to different frequency bands grows linearly with the associated Rabi attentions. }

}

\subsection{Numerical Validation for Signal Model}\label{sec:6B}
In this subsection, we present numerical results to verify the accuracy of the derived transfer function for multi-band RARE systems. 


\subsubsection{Simulation setup}

\begin{table}[]
\caption{Parameters of multi-level electron transitions}
\centering
\begin{tabular}{|c|c|c|c|}
\hline
\rowcolor{lightblue}
Transition                   & Energy level                     & Freq. [GHz] & Dipole moment$^*$ \\ 
\hline
$\ket{3}\rightarrow\ket{4}$  & $60D_{5/2}\rightarrow 61P_{3/2}$ & 3.212                & $2409\:qa_0$                         \\ \hline
$\ket{3}\rightarrow\ket{5}$  & $60D_{5/2}\rightarrow 58F_{7/2}$ & 14.791               & $2013\:qa_0$                         \\ \hline
$\ket{3}\rightarrow\ket{6}$  & $60D_{5/2}\rightarrow 57F_{7/2}$ & 19.883               & $1588\:qa_0$                         \\ \hline
$\ket{3}\rightarrow\ket{7}$  & $60D_{5/2}\rightarrow 62P_{3/2}$ & 30.618               & $736\:qa_0$                         \\ \hline
$\ket{3}\rightarrow\ket{8}$  & $60D_{5/2}\rightarrow 60P_{3/2}$ & 38.858               & $337\:qa_0$                          \\ \hline
$\ket{3}\rightarrow\ket{9}$  & $60D_{5/2}\rightarrow 59F_{7/2}$ & 47.718               & $184\:qa_0$                          \\ \hline
$\ket{3}\rightarrow\ket{10}$ & $60D_{5/2}\rightarrow 56F_{7/2}$ & 56.434               & $260\:qa_0$                          \\ \hline
$\ket{3}\rightarrow\ket{11}$ & $60D_{5/2}\rightarrow 63P_{3/2}$ & 62.756               & $229\:qa_0$                          \\ \hline
\end{tabular}

\begin{tablenotes}
\footnotesize
\item $^*$$a_0 = 52.9{\rm pm}$ denotes the Bohr radius~\cite{SIBALIC2017319}.
\end{tablenotes}
\label{tab2}
\end{table}

Unless otherwise specified, the following simulation settings are adopted. Cesium atoms are confined in a glass vapor cell of length $L = 5\:{\rm cm}$ with an atomic density of $N_0 = 5\times10^{8}\:{\rm cm}^{-3}$.
A probe laser at $\sim$852~nm with Rabi frequency $\Omega_{\rm p} = 2\pi\times 6\:{\rm MHz}$ and a coupling laser at $\sim$510~nm with Rabi frequency $\Omega_{\rm c} = 2\pi\times 10\:{\rm MHz}$ drive the electron transitions $6S_{1/2}\rightarrow 6P_{3/2}\rightarrow 60D_{5/2}$. The decay rate is set to $\gamma_2 = 2\pi\times 5\:{\rm MHz}$. 
{\color{black} The complete parameters for the considered Rydberg electron transitions $\ket{e_0}\rightarrow \ket{e_n}$, including their energy levels and transition dipole moments, are provided in Table~\ref{tab2}. These transitions resonate with $N = 8$ disjoint frequency bands spanning from 3.212 GHz to 62.756 GHz.  } The quantum efficiency is $\eta = 0.8$.

\subsubsection{Numerical results}
\begin{figure}
	\centering
	\includegraphics[width=1\columnwidth]{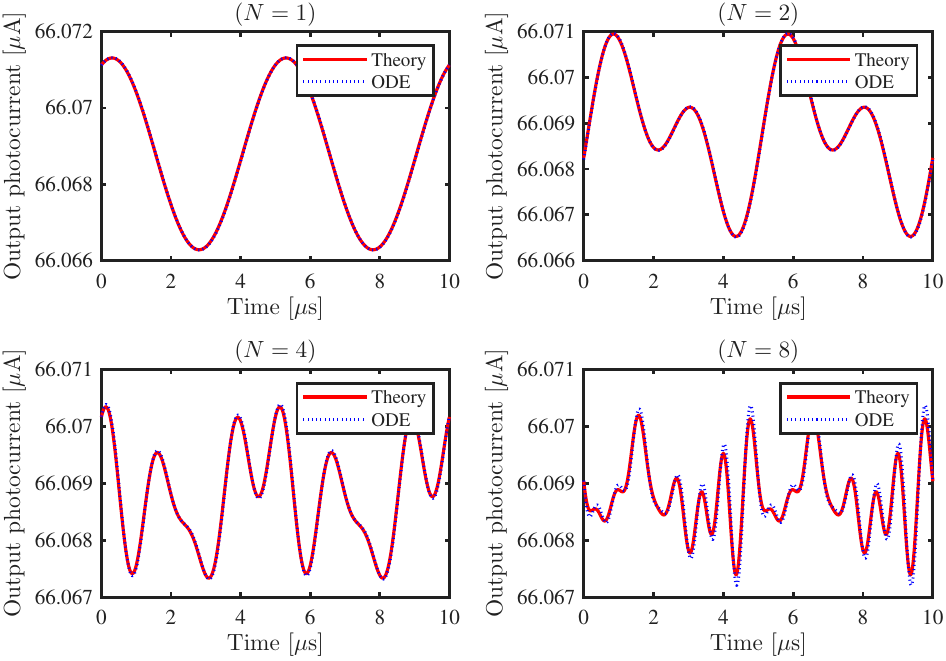}
	\caption{Waveforms of output photocurrent for $N = 1,2,4,8$.} 
	\vspace*{-1em}
	\label{img:waveform}
\end{figure}

Fig.~\ref{img:waveform} illustrates the alignment of waveforms between the theoretical signal model acquired in \eqref{eq:y2}, labeled as ``Theory", and the numerically exact solution, labeled as ``ODE". The exact solutions are obtained by using
an ordinary differential equation (ODE) solver (ODE89 in MATLAB) 
to address the Lindblad master equation given in \eqref{eq:lindblad}.  The number of frequency bands is set to $N = \{1,2,4,8\}$. 
A time window of $1\:{\rm ms}$ is selected to observe the waveforms, during which the amplitude of wireless signals in each band remains constant. The Rabi 
frequencies of reference signals are set to $\Omega_{l,n} = 2\pi \times 
\frac{4}{\sqrt{N}}\:
{\rm MHz}, \forall n$, while those of the data signals are given by $\Omega_{{\rm s},n} = a\Omega_{l,n}e^{j2b\pi}$. Here, $a = 10^{-3}$ ensures the strong reference condition $|\Omega_{\mathrm{r},n}| \gg |\Omega_{\mathrm{s},n}|, \forall n$, and $b \sim \mathcal{U}(0,1)$ models a random phase shift. The IFs for each frequency band are selected 
as $\delta_n=2\pi \times 200n\:{\rm kHz},\forall n\in\{1,2,\cdots, N\}$. 
As presented in Fig.~\ref{img:waveform}, the theoretical waveforms are perfectly consistent with the exact solutions for $N = 1, 2, 4$. A minor deviation is observed only for $N = 8$. These results confirm the accuracy of our theoretical model for both dual-band and multi-band RARE systems.

\begin{figure}[t!]\color{black}
    \centering
    \subfigure[NMSE versus $\delta_{\rm max}$]
    {\includegraphics[width=0.49\columnwidth]{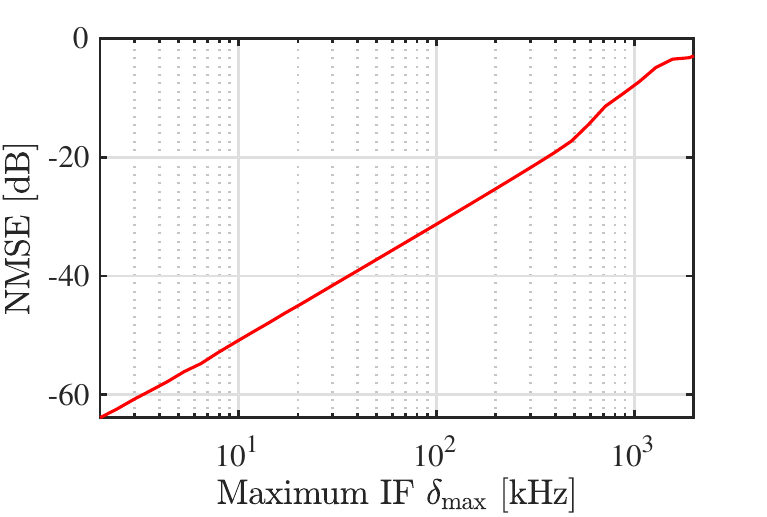}}
    \subfigure[NMSE versus ${|\Omega_{{\rm s},n}|}/{|\Omega_{{\rm r},n}|}$]
    {\includegraphics[width=0.49\columnwidth]{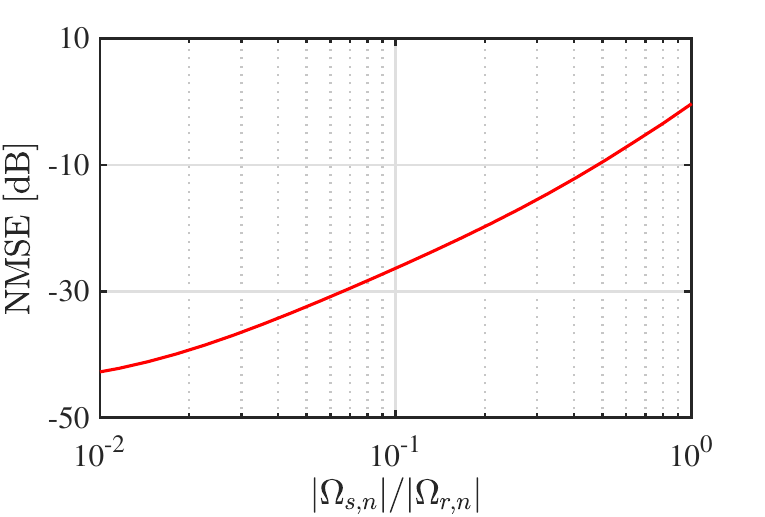}}
    \caption{\color{black} Influence of (a) the maximum IF $\delta_{\rm max}$ and (b) the ratio of Rabi frequency $a = {|\Omega_{{\rm s},n}|}/{|\Omega_{{\rm r},n}|}$ on the NMSE between the theoretical and numerical waveforms. }
	\vspace*{-1em}
	\label{img:NMSE_waveform1} 
\end{figure}
{\color{black} 
To understand the slight discrepancy in the $N = 8$ case, we analyze in  Fig.~\ref{img:NMSE_waveform1} the influence of the maximum IF $\delta_{\max} \overset{\Delta}{=}\delta_N$ and the ratio of Rabi frequency $a = {|\Omega_{{\rm s},n}|}/{|\Omega_{{\rm r},n}|}$ on the NMSE between the theoretical and numerical waveforms. Both subfigures apply $N = 8$ frequency bands. Specifically, the steady-state solution in Theorem 1 requires the time derivative of the density matrix to vanish, $\partial \boldsymbol{\rho} / \partial t \approx 0$. This condition hinges critically on two factors:  a small maximum IF $\delta_{\max}$ and weak data signal strength satisfying $|\Omega_{{\rm s},n}|\ll|\Omega_{{\rm r}, n}|$. First, $\delta_{\rm max}$  must lie within the maximum instantaneous bandwidth (IB) of the RARE system to allow the quantum state $\rho_{12}$ adequate time to reach steady state. According to~\cite{PhysRevA.69.063801}, the IB of a RARE is approximately half the decay rate: ${\rm IB} \approx \frac{\gamma_2}{2} = 2\pi\times 2.5\:{\rm MHz}$. As $\delta_{\max}$ increases from $1\:{\rm kHz}$ to $2\:{\rm MHz}$ in Fig.~\ref{img:NMSE_waveform1}(a), the NMSE rises from -60 dB to 0 dB. 
To maintain an NMSE below -10 dB, the maximum IF cannot exceed 1 MHz. This implies a requirement on the maximum IF: $\delta_{\rm max} \le 1\:{\rm MHz} \approx 0.4\times{\rm IB} = 0.2\gamma_2$. 
This result explains the perfect alignment between the theoretical and numerical waveforms in Fig.~\ref{img:waveform} for $N\le 4$, where $\delta_{\rm max} = \delta_{4} = 2\pi\times 800\:{\rm kHz} <0.2\gamma_2 = 2\pi\times 1\:{\rm MHz}$, as well as the slight deviation for $N = 8$, where $\delta_{\rm max} = \delta_{8} = 2\pi\times 1.6\:{\rm MHz} > 0.2\gamma_2$. The second factor supporting the steady-state solution is that the data signal must be a small perturbation compared to the reference signal. Under this condition, the overall Rabi frequency keeps nearly constant,  $\Omega_{n} \approx \Omega_{{\rm r}, n}$, and the linearization in \eqref{eq:y2} remains valid. In Fig.~\ref{img:NMSE_waveform1} (b), with $\delta_{\max}$ fixed at $2\pi \times 20\:\mathrm{kHz}$, we increase the Rabi frequency ratio $a = |\Omega_{{\rm s}, n}|/|\Omega_{{\rm r}, n}|$ from $10^{-2}$ to $10^0$ to examine the feasible range of data signal strength. The NMSE is observed to rise from -40~dB to over 0~dB, underscoring the necessity of a weak data signal. To ensure a low NMSE, we recommend maintaining ${|\Omega_{\mathrm{s},n}|} / {|\Omega_{\mathrm{r},n}|} < 0.1$ in practical implementations.

\begin{figure}[t!]\color{black}
    \centering
        \subfigure[NMSE versus $\Delta_{1}$]
    {\includegraphics[width=0.49\columnwidth]{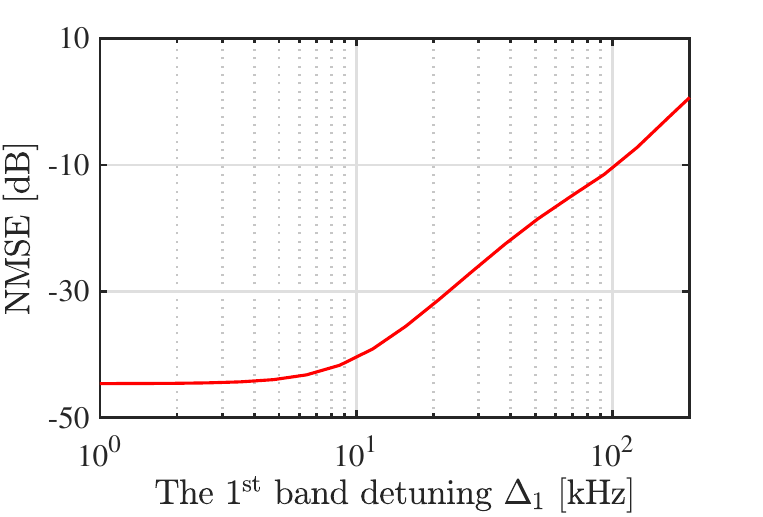}}
    \subfigure[NMSE versus $\Delta_{2}$]
    {\includegraphics[width=0.49\columnwidth]{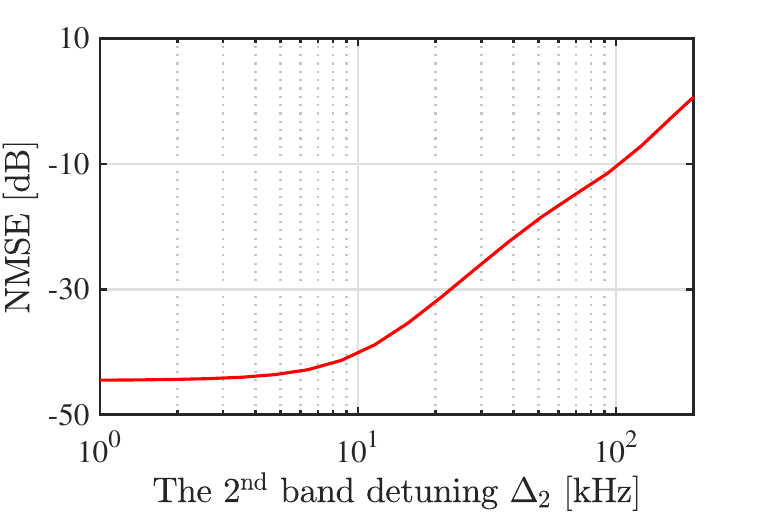}} 
    \subfigure[NMSE versus $\Delta_{\rm p}$]
    {\includegraphics[width=0.49\columnwidth]{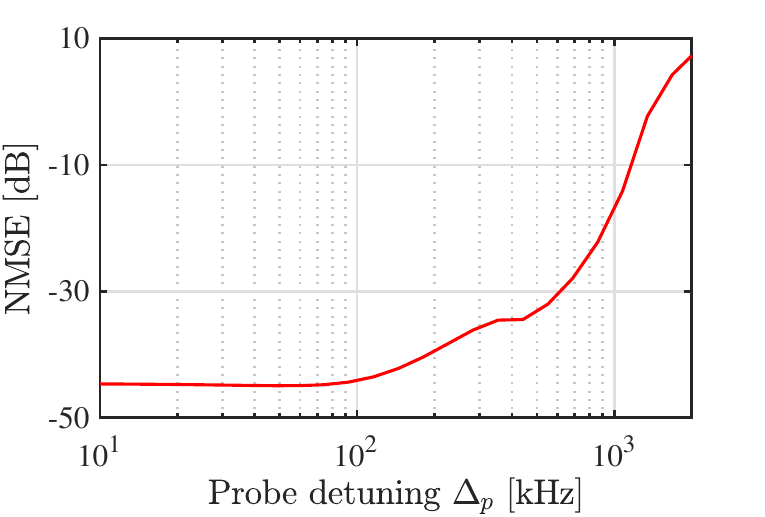}}
    \subfigure[NMSE versus $\Delta_{\rm c}$]
    {\includegraphics[width=0.49\columnwidth]{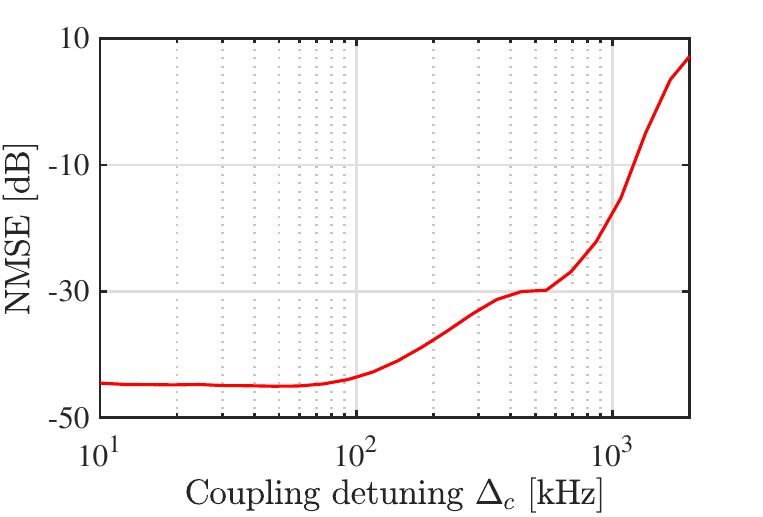}} 
    \caption{\color{black} Influence of frequency detunings of (a) the first band, (b) the second band, (c) the probe laser, and (d) the coupling laser on the NMSE of waveforms.}
	\vspace*{-1em}
	\label{img:NMSE_waveform2} 
\end{figure}

Our theoretical model is established under the assumption that the fields are resonant with their corresponding transition frequencies. 
To evaluate its robustness under non-resonant scenarios, Fig.~\ref{img:NMSE_waveform2} investigates the effect of frequency detunings on the NMSE between theoretical and numerical waveforms. Here, we set $a = 10^{-3}$, $N = 2$, and $\delta_{\rm max} = 2\pi\times 20\:{\rm kHz}$.  Let $\Delta_1 = \omega_1 - \omega_{e_0e_1}$ and $\Delta_2 = \omega_2 - \omega_{e_0e_2}$ denote the detunings of the first and second RF bands, respectively, and $\Delta_{\rm p} = \omega_{\rm p} - \omega_{g_0g_1}$ and $\Delta_{\rm c} = \omega_{\rm c} - \omega_{g_1e_0}$ those of the probe and coupling lasers. Such detunings may arise from frequency drifts in the field sources. As shown in Fig.~\ref{img:NMSE_waveform2}, the NMSE remains below -10~dB provided that the microwave detunings $\Delta_1$ and $\Delta_2$ are within $100\:\mathrm{kHz}$, and the laser detunings $\Delta_{\rm p}$ and $\Delta_{\rm p}$ do not exceed $1\:\mathrm{MHz}$. 
These findings confirm that the proposed model maintains good performance under small frequency offsets typically encountered in experiments, thereby clarifying its practical scope under non-ideal frequency conditions.

}



\subsection{Numerical Validation for Multi-band CommunSense Performance}\label{sec:6C}
We now evaluate multi-band CommunSense performance by comparing the multi-band RARE with classical receivers. The communication performance is quantified by the SE, while the sensing performance is assessed using the NCRLB. 

\subsubsection{Simulation setup}
We consider a dual-band CommunSense system. Two Rydberg states, $61P_{3/2}$ and $58F_{7/2}$, listed in Table~I are selected to detect midband signals at $\omega_1 = 2\pi\times 14.791\:{\rm GHz}$ and mmWave signals at $\omega_2 = 2\pi\times 30.628\:{\rm GHz}$. The Rabi attentions for the two bands are expressed as $\alpha_1 = \alpha$ and $\alpha_2 = 1 - \alpha$, with 
$\alpha$ ranging from 0 to 1. The powers of $\sim$852nm probe laser and $\sim$510nm coupling laser are set to 100$\:{\rm \mu W}$ and $40\:{\rm mW}$, respectively, and the decay rate is $\gamma_2 = 2\pi\times 5$~MHz. 
The IFs and occupied bandwidths of the dual-band signals are set to $\delta_1 = 
2\pi\times 100\:{\rm kHz}$,  $B_1 = 50\:{\rm kHz}$ and $\delta_2 = 2\pi\times 
200\:{\rm kHz}$,  $B_2 = 50\:{\rm kHz}$. Both wireless data sources deliver a transmit power of $0\:{\rm dBm}$. The link distance is randomly sampled between 50~m and 1000~m. {\color{black} For comparison, a dual-band classical receiver system is emulated by deploying two dedicated receivers operating at  $\omega_1 =2\pi\times 14.791\:{\rm GHz}$ (midband) and $\omega_2 = 2\pi\times30.628\:{\rm GHz}$ (mmWave). These are labeled as ``CR1'' and ``CR2'', respectively. 
Each receiver is equipped with a resonant half-wavelength dipole antenna (lengths of 10.1~mm and 4.9~mm) and a dedicated front-end circuit. The system parameters are defined as follows: antenna gain is 2~dBi, the system operates at a noise equivalent temperature of 300~Kelvin, and the signal bandwidth is set to 50 kHz, identical to that of the Rydberg receiver for a direct performance comparison. }

\subsubsection{Numerical results}

\begin{figure}[t!]
    \centering
    \subfigure[Communication performance]
    {\includegraphics[width=0.49\columnwidth]{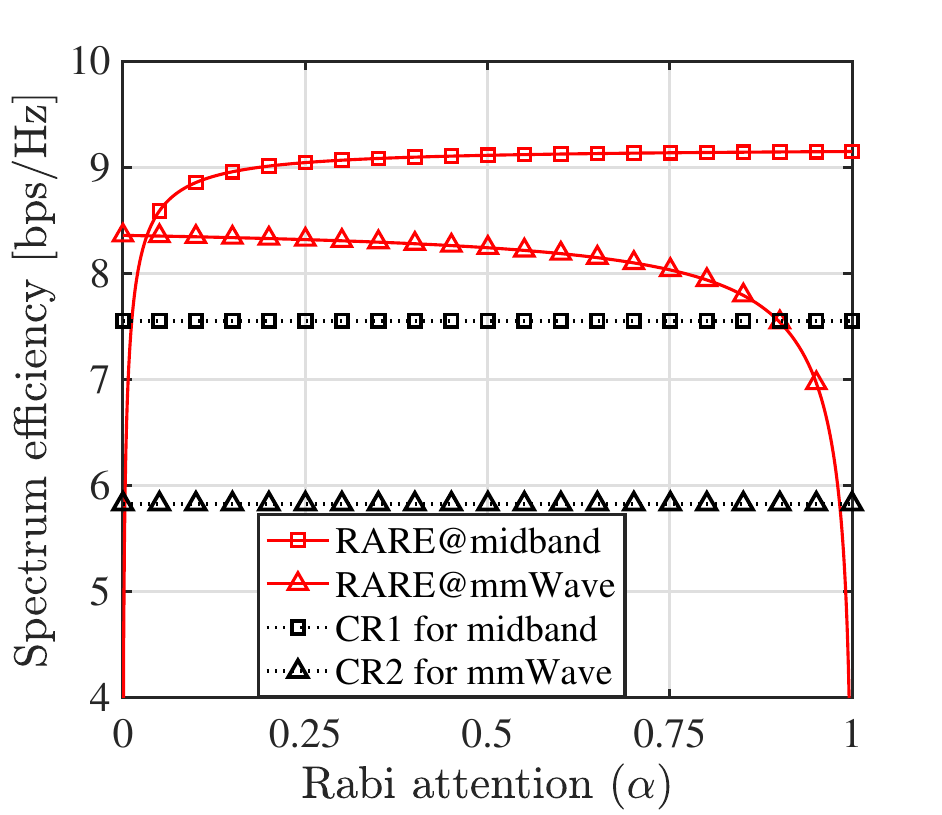}}
    \subfigure[Sensing performance]
    {\includegraphics[width=0.49\columnwidth]{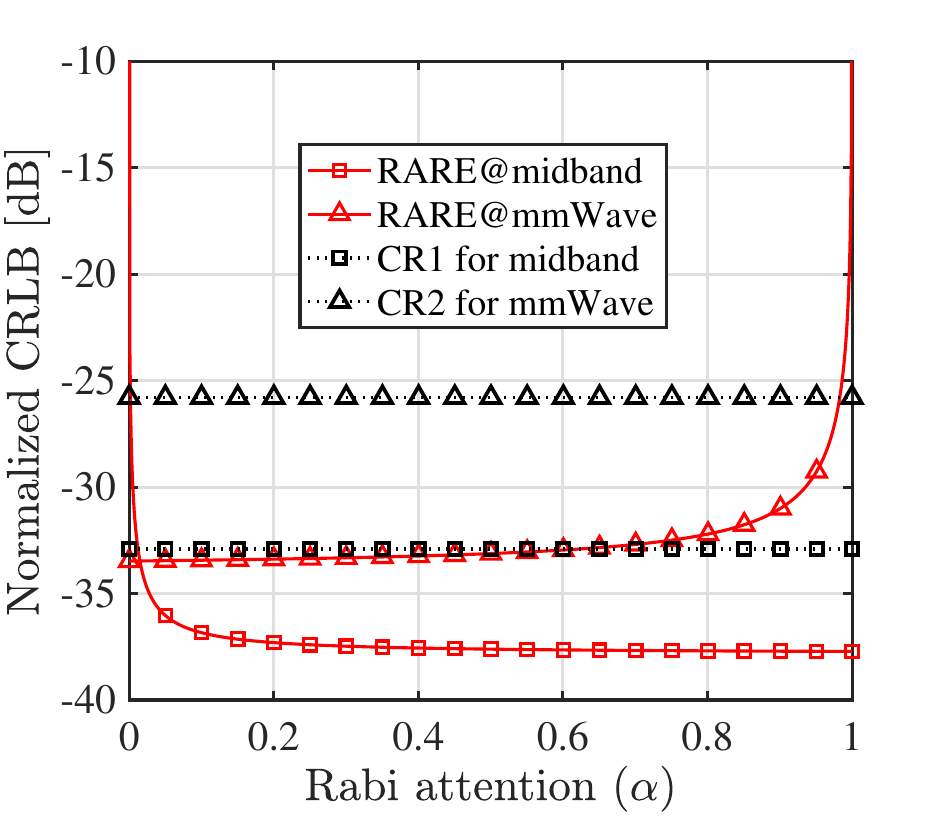}}
    \caption{CommunSense performance at different frequency bands for different settings of Rabi attention, $\alpha$.}
	\vspace*{-1em}
	\label{img:attention} 
\end{figure}

Fig.~\ref{img:attention} evaluates the CommunSense performance under 
varying Rabi attentions.  As shown, the midband
performance achieved by RARE continuously 
improves with increasing $\alpha$, whereas the mmWave performance
declines correspondingly.  This trend stems from the attention mechanism,  where $\alpha$ governs the fraction of sensitivity allocated to midband.
Notably, when 
$\alpha = 1$,  the system reduces to a conventional single-band RARE.
In this case, the CommunSense performance at midband is maximized, achieving an SE of $9.2\:{\rm bps/Hz}$ and an NCRLB of 
$-38\:{\rm 
dB}$, while mmWave-band CommunSense is completely disabled. Conversely, 
setting $\alpha = 0.5$ yields a balanced configuration: the midband performance experiences only marginal degradation (a $0.1\:{\rm bps/Hz}$ reduction in SE and 
$0.2\:{\rm dB}$ increase in NCRLB), whereas the mmWave band shows substantial improvement: SE rises from $0\:{\rm bps/Hz}$ to  
$8.2\:{\rm bps/Hz}$ and the 
NCRLB drops from infinity to $-33\:{\rm dB}$. This example highlights the superiority of the multi-band RARE over its single-band counterpart in handling heterogeneous CommunSense tasks. 
Moreover, when the Rabi attention is properly configured, the multi-band RARE achieves significantly better performance than the two classical receivers. For instance,  at $\alpha = 
0.4$, the multi-band RARE yields an average improvement of $2\:{\rm bps/Hz}$ in SE and a 
$6\:{\rm dB}$ reduction in 
NCRLB. 
 It is worth emphasizing that all these multi-band CommunSense gains are accomplished within a single integrated RARE device.

\begin{figure}[t!]
\centering
\subfigure[Communication performance]
{\includegraphics[width=0.49\columnwidth]{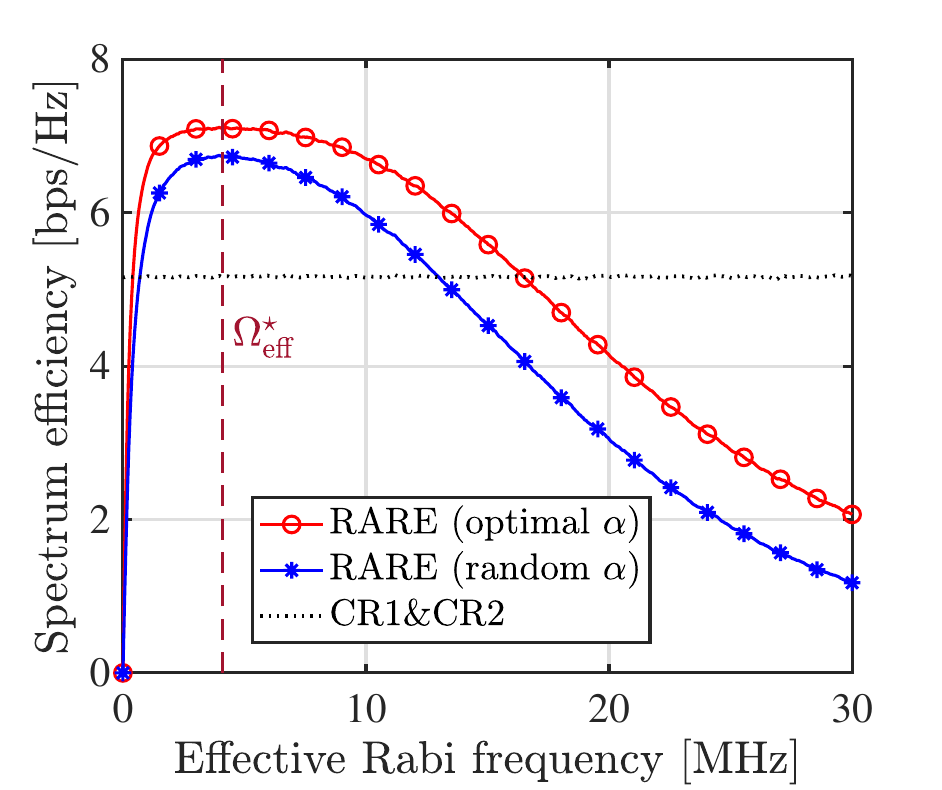}}
\subfigure[Sensing performance]
{\includegraphics[width=0.49\columnwidth]{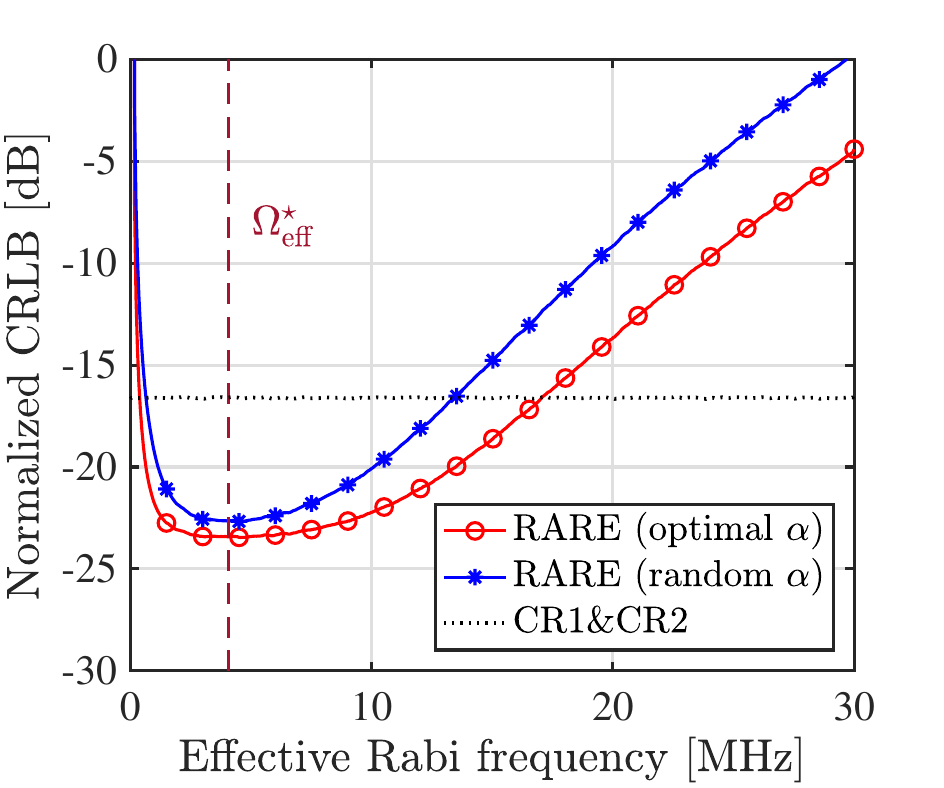}}
\caption{Performance comparison between the multi-band RAREs and classic receivers for 
	different settings of effective Rabi frequency, $\Omega_{\rm eff}$. }
	\vspace*{-1em}
	\label{img:sumsquare} 
    \end{figure}


Then, we evaluate in Fig.~\ref{img:sumsquare} the influence of the effective Rabi frequency $\Omega_{\rm eff}$ on (a) SE and (b) NCRLB, both averaged over the two frequency bands. The following configurations are compared: {\color{black}(1)  a multi-band classical receiver implemented by the integration of two single-band classical receivers, denoted as ``CR1$\&$CR2''}, (2) a multi-band RARE with Rabi attention weights randomly sampled as $\alpha \sim \mathcal{U}(0,0.5)$, and (3) a multi-band RARE with the optimally configured Rabi attention. As $\Omega_{\rm eff}$ increases, the performance of the RARE systems exhibits a characteristic rise followed by a decline. This behavior can be explained through the underlying electron transition dynamics.
When the effective Rabi frequency is either too low or too high, the electron population becomes predominantly localized in either the initial or final Rydberg states. In both cases, the system responds poorly to the weak wireless data signal.  Optimal sensitivity is achieved only at intermediate Rabi frequencies, where a moderate number of electrons occupy the final Rydberg states, enabling effective signal detection.
Notably, the optimal effective Rabi frequency, $\Omega_{\rm eff}^\star$, can consistently capture the most sensitive point, regardless of the specific Rabi attention weights or the CommunSense services considered. This universality arises because $\Omega_{\rm eff}^\star$ governs the global gain $\varrho_0$, which amplifies signals uniformly across all frequency bands, thereby enhancing overall system sensitivity.
Using the optimal effective Rabi frequency and Rabi attentions,  the RARE outperforms the classical receiver benchmark by $1.9\:{\rm bps/Hz}$ in SE and $7\:{\rm dB}$ in NCRLB. 
These results underscore the critical importance of optimizing $\Omega_{\rm eff}$  to maximize the sensitivity of a RARE.


\begin{figure}[t!]
\centering
\subfigure[Communication performance]
{\includegraphics[width=0.49\columnwidth]{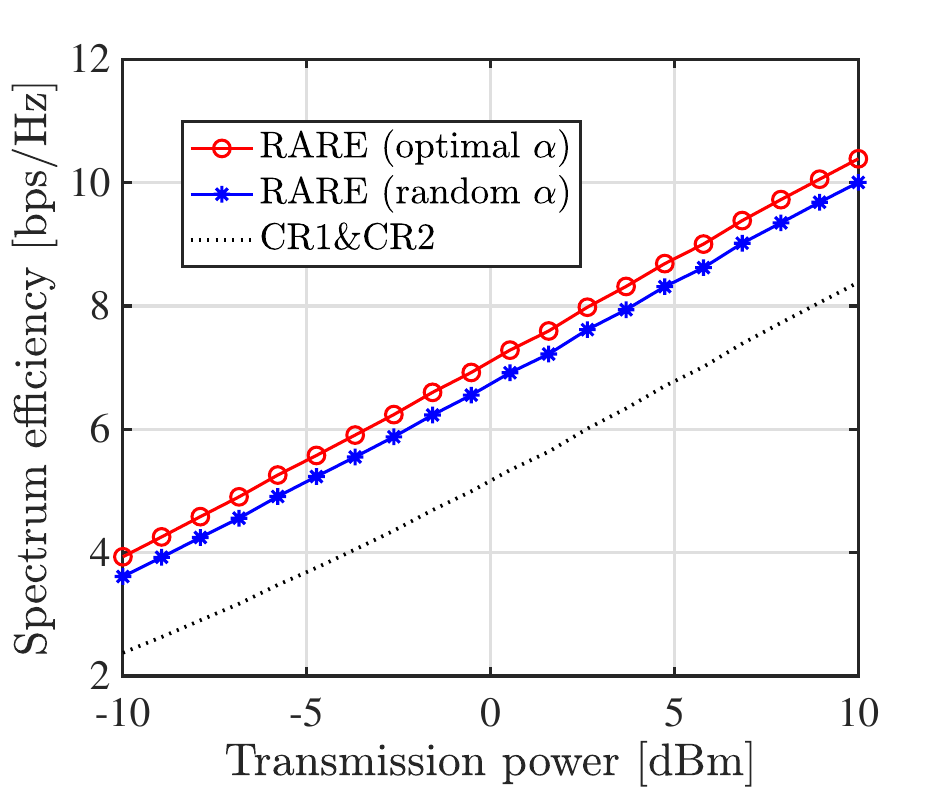}}
\subfigure[Sensing performance]
{\includegraphics[width=0.49\columnwidth]{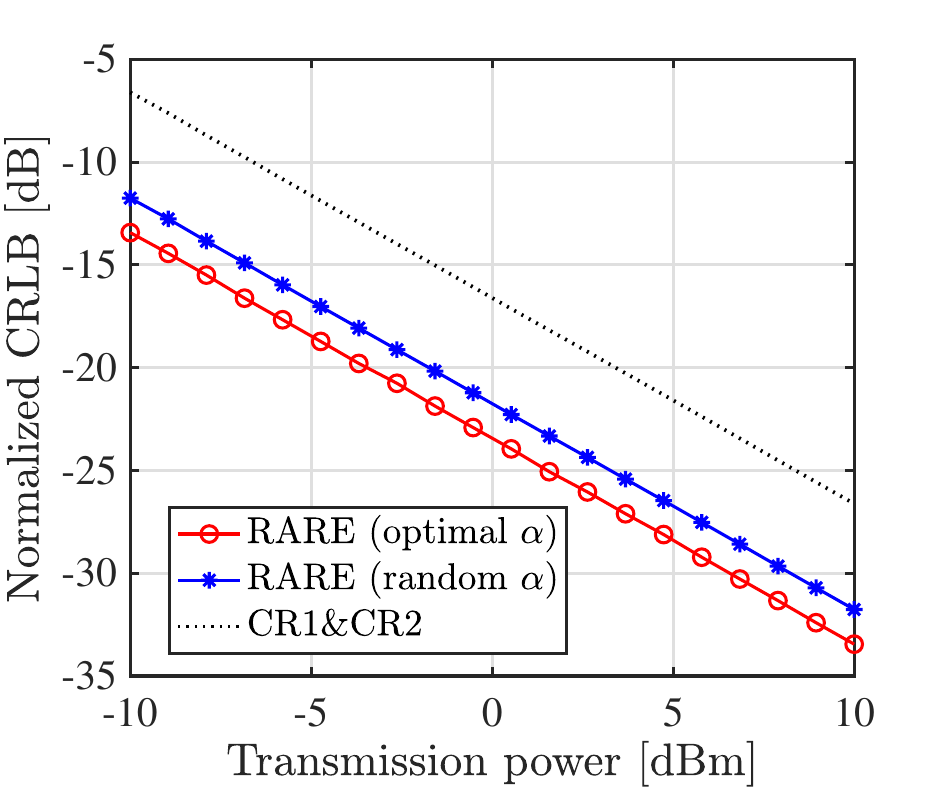}}
\caption{Performance comparison between the multi-band RAREs and classic receivers for 
		different settings of the transmission power, $P_{{\rm s}, n}$. }
	\vspace*{-1em}
	\label{img:ptx} 
    \end{figure}
Fig.~\ref{img:ptx} depicts the CommunSense 
performance as a function of 
the transmission power of data signals, $P_{{\rm t} , n}$, 
varied from $-10\:{\rm dBm}$ to 
$10\:{\rm dBm}$. 
The optimal effective Rabi frequency is adopted for both ``RARE 
(optimal $\alpha$)" and ``RARE (random $\alpha$)". As observed, the multi-band RARE systems significantly outperform the classical counterpart across the entire considered power range. Furthermore, compared to the ``RARE (random $\alpha$)" baseline, the optimized Rabi attention weights further 
improve the SE by $0.5\:{\rm bps/Hz}$ and reduce the NCRLB by $2\:{\rm dB}$. 
These results highlight the critical role of optimizing Rabi attention weights in tailoring RARE systems for specific communication and sensing tasks.

\section{Conclusions and Future Directions}\label{sec:7}
In this paper, we have mathematically characterized the physical mechanisms of multi-band RAREs. {\color{black} 
An analytical transfer function has been derived with explicit sensitivity allocation.} Our analysis reveals that a multi-band RARE functions as both a multi-band atomic mixer and a multi-band atomic amplifier. 
Additionally, the attention mechanism of the multi-band atomic amplifier has been discovered, which was demonstrated by real-world experiments. We have also presented optimal design for the atomic amplifier to maximize the receiver sensitivity and the overall CommunSense performance.

Despite these contributions, the current framework proposed in this paper represents a starting point towards quantum sensor enabled CommunSense. Some unresolved issues remain further investigations. For example, our framework distinguishes frequency bands by allocating them to different intermediate frequencies. An alternative approach is to integrate multi-band RARE with atomic MIMO technology~\cite{AtomicPrecoding_Cui2024, AtomicMIMO_Cui2025}, which enables spatial orthogonality between frequency bands.   In such configurations, variations in reference signal propagation paths can lead to non-uniform Rabi frequencies across receiving antennas. This effect results in spatially heterogeneous sensitivity patterns across frequency bands, which necessitates systematic investigation to enable efficient atomic multi-band MIMO techniques. 
    Moreover, this work assumes the reference signals are co-polarized with the probe and coupling lasers. This can be achieved by adjusting the half-wave plates (HWPs) in each optical path to produce a symmetrical EIT spectrum with only two peaks~\cite{RydbergPolarization_Sedlacek2013, Jiao_2019, 10.1063/5.0146768}. For misaligned polarizations, multiple $m_j$-dependent transition pathways can be activated. This would result in three EIT peaks for a four-level quantum system, and potentially multiple peaks for an $(N+3)$-level quantum system. A full analysis of the quantum response under arbitrary polarization conditions would require incorporating fine‑structure terms into the multi‑band Hamiltonian and is left for future study. 




\appendix{
{\color{black}
\subsection{Steady-State Solution of $\boldsymbol{\rho}$ to Equation \eqref{eq:lindblad}} \label{app:solution}
The steady-state solution for the density matrix $\boldsymbol{\rho}$ is derived from \eqref{eq:lindblad}. The elements corresponding to the first three rows and columns are:
\begin{align}
\rho_{11} & = \frac{(\Omega_{\rm p}^2+\gamma_2^2)\sum_{m=1}^N\Omega_m^2 + \Omega_{\rm c}^2\Omega_{\rm p}^2}{(\gamma_2^2+2\Omega_{\rm p}^2)\sum_{m=1}^N\Omega_m^2 + 2\Omega_{\rm p}^2(\Omega_{\rm c}^2 + \Omega_{\rm p}^2)}, \notag\\
\rho_{12} &= j\frac{\gamma_2\Omega_{\rm p}\sum_{m=1}^N\Omega_m^2}{(\gamma_2^2+2\Omega_{\rm p}^2)\sum_{m=1}^N\Omega_m^2 + 2\Omega_{\rm p}^2(\Omega_{\rm c}^2 + \Omega_{\rm p}^2)}, \notag\\
\rho_{13} &= - \frac{\Omega_{\rm c}\Omega_{\rm p}^3}{(\gamma_2^2+2\Omega_{\rm p}^2)\sum_{m=1}^N\Omega_m^2 + 2\Omega_{\rm p}^2(\Omega_{\rm c}^2 + \Omega_{\rm p}^2)},\notag\\
\rho_{22} & = \frac{\Omega_{\rm p}^2\sum_{m=1}^N\Omega_m^2} {(\gamma_2^2+2\Omega_{\rm p}^2)\sum_{m=1}^N\Omega_m^2 + 2\Omega_{\rm p}^2(\Omega_{\rm c}^2 + \Omega_{\rm p}^2)},\notag\\
\rho_{23} & = 0,\notag\\
\rho_{33} & = \frac{\Omega_{\rm p}^4} {(\gamma_2^2+2\Omega_{\rm p}^2)\sum_{m=1}^N\Omega_m^2 + 2\Omega_{\rm p}^2(\Omega_{\rm c}^2 + \Omega_{\rm p}^2)}. \notag
\end{align}
The remaining elements, for indices $n, n' \in \{1,\dots, N\}$, are given by:
\begin{align}
&\rho_{1,n + 3} = -j \frac{\gamma_2\Omega_{n}\Omega_{\rm c}\Omega_{\rm p}}{(\gamma_2^2+2\Omega_{\rm p}^2)\sum_{m=1}^N\Omega_m^2 + 2\Omega_{\rm p}^2(\Omega_{\rm c}^2 + \Omega_{\rm p}^2)},\notag\\
&\rho_{2, n + 3}  =  - \frac{\Omega_{n}\Omega_{\rm c}\Omega_{\rm p}^2}{(\gamma_2^2+2\Omega_{\rm p}^2)\sum_{m=1}^N\Omega_m^2 + 2\Omega_{\rm p}^2(\Omega_{\rm c}^2 + \Omega_{\rm p}^2)},\notag\\
&\rho_{3,n+3}=0, \notag\\
&\rho_{n+3, n'+3}   \notag \\ &=\frac{\Omega_{n}\Omega_{n'}\Omega_{\rm p}^2(\Omega_{\rm p}^2+\Omega_{\rm c}^2)} {(\gamma_2^2+2\Omega_{\rm p}^2)(\sum_{m=1}^N\Omega_m^2)^2 + 2\Omega_{\rm p}^2(\Omega_{\rm c}^2 + \Omega_{\rm p}^2)\sum_{m=1}^N\Omega_m^2}.\notag
\end{align}
Furthermore, since $\boldsymbol{\rho}$ is Hermitian, elements below the main diagonal are obtained by complex conjugation: $\rho_{n,n'} = \rho_{n',n}^*$ for $n > n'$. 
}

{\color{black}
\subsection{Derivation of The Power of Extrinsic Noise}\label{app:bbr}
The number of modes per unit volume per unit bandwidth is given by $\frac{8\pi f^2_n}{c^3} = \frac{2\omega_n^2}{\pi c^3}$. The average field energy per mode due to the blackbody radiation and vacuum fluctuation is $\left(\frac{1}{2} + n_{\rm th}\right)\hbar \omega_n$, where the term $\frac{1}{2}\hbar\omega_n$ arises from the vacuum fluctuation and $n_{\rm th} = {1}/{\left(e^{\hbar \omega_n/k_B T_{\rm a}} - 1\right)}$ is the Bose-Einstein distribution. The product of the number of modes and the energy per mode yields the spectral energy density (energy per unit volume per unit frequency):
\begin{align}
    u_{\rm ext}(\omega_n) = \frac{2\omega_n^2}{\pi c^3} \cdot \left(\frac{1}{2} + n_{\rm th}\right)\hbar \omega_n = \frac{\hbar\omega_n^3}{\pi c^3}(2n_{\rm th} + 1).
\end{align}
For a propagating electromagnetic wave, the total time-averaged energy density is equally divided between its electric and magnetic field components. The energy density due to the electric field alone is $u_{E}(\omega_n) = \frac{1}{2}\epsilon_0 \langle{E^2_{\rm ext}}\rangle$. Therefore, the spectral density of the electric field energy is half of the total spectral energy density: 
$\frac{{\rm d}u_{E}}{{\rm d} f} = \frac{1}{2}\epsilon_0 \frac{{\rm d}\langle{E^2_{\rm ext}}\rangle}{{\rm d} f} = \frac{1}{2}u_{\rm ext}(\omega_n)$. This yields the field spectral density: 
\begin{align}\color{black}
    \frac{{\rm d}\langle{E^2_{\rm ext}}\rangle}{{\rm d} f} = \frac{\hbar\omega_n^3}{\pi \epsilon_0 c^3}(2n_{\rm th} + 1)\:\:[{\rm V}^2 \cdot {\rm m}^{-2} \cdot {\rm Hz}^{-1}].
\end{align}
{\color{black}
The extrinsic field $E_{\rm ext}(t)$ is added onto the RF field $E_{{\rm RF}}(t)$ in free space, and is also weak compared to the strong reference signal. It is thereby transduced to photocurrent via the same microwave-to-optical conversion process presented in \eqref{eq:mapping}, which is amplified by the intrinsic gain, $\kappa_n$, of RARE as well, resulting in a photocurrent spectral density of $\kappa_n^2\frac{{\rm d}\langle{E^2_{\rm ext}}\rangle}{{\rm d} f}$. Consequently, the power of extrinsic noise at band $n$ over a bandwidth of $B_n$ is derived as $\sigma^2_{{\rm E},n} = \frac{\kappa_n^2B_n\hbar\omega_n^3}{\pi \epsilon_0 c^3}(2n_{\rm th} + 1)$.
}

\subsection{Proof of Theorem 2}\label{app:proof_sumsquare}
Problem \eqref{eq:P0} is equivalent to maximizing the function $f(\Omega_{\rm eff}) = \exp\left(-\frac{\chi_0\Omega_{\rm eff}^2}{\Omega_{\rm eff}^2 + \Gamma^2}\right)\frac{\Omega_{\rm eff}^2}{(\Omega_{\rm eff}^2 + \Gamma^2)^4}$ with $\Omega_{\rm eff}^2 > 0$. To solve it, we introduce the variable transformation $x = \frac{\Omega_{\rm eff}^2}{\Omega_{\rm eff}^2 + \Gamma^2}$. Then, we get $0\le x < 1$ and $\Omega_{\rm eff}^2 = \frac{x}{1-x}\Gamma^2$. The function $f(\Omega_{\rm eff})$ can thus be rewritten as 
\begin{align}
    f(\Omega_{\rm eff}) = \frac{1}{\Gamma^6} e^{-\chi_0 x}x(1 - x)^3\overset{\Delta}{=} g(x). 
\end{align}
The gradient of $g(x)$ is given as 
\begin{align}
    \frac{{\rm d} g(x)}{{\rm d} x} = \frac{1}{\Gamma^6}e^{-\chi_0 x}(1 - x^2)(\chi_o x^2 - (\chi_0 + 4)x + 1).
\end{align}
By setting $\frac{{\rm d} g(x)}{{\rm d} x} = 0$ and doing some tedious calculations, we can obtain the optimal $x$ that maximizes $g(x)$ in the range $x\in[0,1)$ as 
\begin{align}
x^\star = \frac{\chi_0 + 4 - \sqrt{\chi_0^2 + 4\chi_0 + 16}}{2\chi_0}.
\end{align}
When $0\le x < x^\star $, the gradient $\frac{{\rm d} g(x)}{{\rm d} x}$ is greater than 0 and thus $g(x)$ monotonically increases, while when $x^\star< x < 1 $, we have $\frac{{\rm d} g(x)}{{\rm d} x} < 0$ and thus $g(x)$ monotonically decreases,

By further invoking the relationship $\Omega_{\rm eff}^2 = \frac{x}{1 - x}\Gamma^2$. The optimal effective Rabi frequency is thus given as
\begin{align}
    \Omega_{\rm eff}^\star = \sqrt{\frac{\chi_0+4 - \sqrt{\chi_0^2 + 4\chi_0 + 16}}{\chi_0 -4 + \sqrt{\chi_0^2 + 4\chi_0 + 16}}}\Gamma.
\end{align}
This completes the proof.

\subsection{Proof of Theorem 3}\label{app:proof_rabiattention}
The Lagrange function of problem \eqref{eq:ASE} is given as 
\begin{align}
    \mathcal{L} = \sum_{n=1}^N\gamma_n\log_2\left(1 + \frac{\beta_n\alpha_n}{\alpha_n + \epsilon_n}\right) + \frac{1}{\nu}\left(1 - \sum_{n = 1}^N\alpha_n\right).
\end{align}
Here, $\nu$ is the Lagrange multiplier and we adopt its reciprocal form $1/\nu$ for ease of later expression. The KKT conditions are given as 
\begin{align}
        \frac{\partial \mathcal{L}}{\partial \mathcal{\alpha}_n} &= \frac{\gamma_n\beta_n\epsilon_n}{((1+\beta_n)\alpha_n + \epsilon_n)(\alpha_n + \epsilon_n)} - \frac{1}{\nu} = 0, \forall n,\label{eq:C1}\\
        \sum_{n = 1}^N\alpha_n &= 1. 
\end{align}
The condition in \eqref{eq:C1} follows a quadratic form. As $\alpha_n \ge 0$, we have $\frac{\gamma_n\beta_n\epsilon_n}{((1+\beta_n)\alpha_n + \epsilon_n)(\alpha_n + \epsilon_n)} > 0$. Therefore, the Lagrange multiplier $\nu$ must be larger than 0 as well. Otherwise, there will be no feasible solutions to \eqref{eq:C1} for all $n$. Given $\nu \ge 0$, the solutions of these quadratic equations should obey 
\begin{align}
    \alpha_n = \frac{-(2+\beta_n)\epsilon_n \pm \sqrt{\beta_n^2\epsilon_n^2 + 4\nu\gamma_n\beta_n(1+\beta_n)\epsilon_n}}{2(1+\beta_n)}. 
\end{align}
By further considering the requirement $\alpha_n \ge 0$, the solution of $\frac{-(2+\beta_n)\epsilon_n - \sqrt{\beta_n^2\epsilon_n^2 + 4\nu\gamma_n\beta_n(1+\beta_n)\epsilon_n}}{2(1+\beta_n)}$ should be discarded as it is always smaller than 0 for all $n$. Besides, the other one needs to be modified as 
\begin{align}
        \alpha_n^\star = \left(\frac{-(2+\beta_n)\epsilon_n + \sqrt{\beta_n^2\epsilon_n^2 + 4\nu^\star\gamma_n\beta_n(1+\beta_n)\epsilon_n}}{2(1+\beta_n)} \right)^+, \notag
\end{align}
where $(x)^+ \overset{\Delta}{=}\max(0, x)$. The optimal Lagrange multiplier $\nu^\star$ is properly selected to satisfy the constraint $\sum_{n=1}^N\alpha_n^\star = 1$. This completes the proof.

}

\bibliographystyle{IEEEtran}
\bibliography{Reference.bib}

\end{document}